\documentclass[12pt,reqno]{amsart}

\usepackage{amscd,amssymb,amsmath,latexsym}
\usepackage[mathscr]{euscript}
\usepackage{epsfig}
\usepackage{xcolor} 
\usepackage{bm} 
\usepackage{hyperref} 
\allowdisplaybreaks 
\usepackage{physics}
\usepackage[left=.5in,right=.5in,top=1in,bottom=1in]{geometry}
\usepackage{verbatim}
\usepackage{multirow} 
\usepackage{float} 

\usepackage{cancel} 



\newcommand{\N}{\mathbb N}

\newcommand{\C}{\mathbb C}

\renewcommand{\P}{\mathbb P}






\allowdisplaybreaks

\makeatletter
\def\blfootnote{\xdef\@thefnmark{}\@footnotetext}
\makeatother


\newtheorem{Thm}{Theorem}[section]
\newtheorem{Cor}[Thm]{Corollary}
\newtheorem{Lem}[Thm]{Lemma}

\newtheorem{Rmk}[Thm]{Remark}

\newtheorem{Def}[Thm]{Definition}
\newtheorem{Conv}[Thm]{Convention}
\begin{document}

\title{Optimal lower bound for lossless quantum block encoding}

\author{George Androulakis, Rabins Wosti}\blfootnote{This article is part of the PhD dissertation of the second author who is a graduate student in the department of Computer Science and Engineering of the University of South Carolina. A part of this work will be presented in July at 2024 IEEE International Symposium on Information Theory (ISIT 2024).}

\address{Department of Mathematics, University of South Carolina, 
Columbia, SC 29208}
\email{giorgis@math.sc.edu}

\address{Department of Computer Science and Engineering, University of South Carolina, Columbia, SC 29208}

\email{rwosti@email.sc.edu}

\begin{abstract}
Consider a general quantum stochastic source that emits at discrete 
time steps quantum pure states which are chosen from a finite alphabet according to some probability distribution 
which may depend on the whole history.  Also, fix two 
positive integers $m$ and $l$.
We encode any
tensor product of $ml$ many states emitted by the quantum stochastic source by 
breaking the tensor product into $m$ many blocks where each block has 
length $l$, and considering 
sequences of $m$ many isometries so that each isometry
encodes one of these blocks into the Fock space, followed by  
the concatenation of their images. We only consider certain 
sequences of such isometries that we call 
\lq\lq special block codes\rq\rq\
in order to ensure that the string of encoded states is uniquely decodable. We compute the minimum average codeword length
of these encodings which depends on the quantum source
and the integers $m$, $l$, among all possible special block codes.
Our result extends the result of
[Bellomo, Bosyk, Holik and Zozor,  Scientific Reports 7.1 (2017): 14765] where the minimum was computed for one block,
i.e.\ for $m=1$. Lastly, we give a simplified non-adaptive compression technique based on constrained special block codes for general quantum stochastic sources. For quantum stationary sources in particular, we show that the minimum average codeword length per symbol computed over all constrained special block codes is equal to the von-Neumann entropy rate of the source for an asymptotically long block size.
\end{abstract}

\keywords{Indeterminate length, Kraft-McMillan Inequality, uniquely decodable code, Fock space}
\subjclass[2020]{Primary: 81P45, Secondary: 94A15.}
\maketitle 

\section{Introduction}
Important foundations to the area of quantum encoding were provided by Schumacher 
who proved the quantum analog of 
Shannon’s noiseless coding theorem for an independent and 
identically distributed (iid) quantum source, \cite{Schumacher1995Apr}. 
In this article we study lossless, variable-length block encoding of quantum information
which is emitted from a completely general quantum source, and it is encoded into the Fock space. 
Since the Schumacher's mentioned result, 
various block encoding schemes for quantum sources have been 
proposed. We review these schemes in the next few paragraphs of the current section. At the end of the section we 
describe our contribution in more detail. 

Chuang and Modha \cite{Chuang2000May} proposed a 
source-dependent quantum algorithm to project the block quantum 
state of an iid quantum source to its \lq\lq typical subspace\rq\rq. The projected state is compressed using the quantum 
analog of Shannon-Fano code. According to that code,  
the number of qubits per 
signal is only slightly larger than the von~Neumann entropy of the 
source for sufficiently large block size. 

A universal fixed-length quantum 
block compression scheme was introduced by Jozsa et. al. \cite{Jozsa1998Aug} for a family of iid quantum sources such that the 
von~Neumann entropy of each source in the family is known to be upper bounded by a fixed number $S$. Their encoding is source-independent, in the sense that one does not need to know 
the average ensemble state which is emitted by the quantum  
source. In particular, they 
showed that for a sufficiently large block size, their universal encoding compresses the quantum information from any of these sources to $S$ qubits/signal and is asymptotically optimal. Kalchenko and Yang extended this result in \cite{Kaltchenko2003Aug, Kaltchenko2003}. They proved that 
for a family of stationary ergodic quantum sources such that the von~Neumann entropy rate of each source in the family is 
upper bounded by a real number $r > 0$, there exists a 
sequence of universal projectors which 
compresses the quantum information from any of these sources to $r$ qubits/signal with high fidelity for sufficiently large block 
size. Universal variable-length lossless quantum data compression schemes have been discussed by Hayashi and Matsumoto in \cite{Hayashi2002Aug, HayashiMasahito2002Dec} and by Hayashi in \cite{Hayashi2010Jan}.

Another source-dependent quantum block compression was 
introduced by Szeto \cite{Szeto1996Jul} for an 
iid quantum source. This encoding achieves the 
same compression performance as Schumacher’s noiseless coding theorem \cite{Schumacher1995Apr}, 
but employs a classical block compression alongside, and
adopts
the assumption that the Hilbert space can be decomposed into two 
or more mutually orthogonal subspaces and each signal from the 
source comes from one of these subspaces. 

Langford \cite{Langford2002May} gave a compilation algorithm that takes two inputs. The first input is any classical block compression scheme that uses bounded space and time,
and the second input is the density matrix of an iid quantum source. The algorithm outputs a fixed-length quantum block compression algorithm. If the input classical algorithm asymptotically approaches the Shannon entropy of the classical source, then the output quantum block compression algorithm also asymptotically 
approaches the von~Neumann entropy of the input quantum source.



There are several completely lossless 
compression algorithms in the classical information theory, 
such as Huffman encoding \cite{Huffman1952Sep}, Shannon-Fano encoding \cite{shannon, fano}and 
arithmetic encoding \cite{arithmetic}. It 
is natural to attempt to generalize these classical 
algorithms to the quantum realm in order to obtain lossless 
quantum 
data compression. Classical completely lossless block 
compression algorithms make use of variable-length codes 
where sequences of symbols with high probability of 
occurrence are assigned codewords of shorter length and less 
probable sequences have longer codewords, thereby reducing 
the average codeword length. This idea extends to 
quantum data compression. In the quantum setting, 
not only the codewords can have different lengths,
but also the codewords can be in superposition of states of 
different length and they are called indeterminate-length 
codewords. Boström and Felbinger \cite{Bostroem2002Feb} show that it is impossible to achieve a 
completely lossless quantum block compression scheme by 
using fixed-length codewords. Therefore, the
indeterminate-length codewords form an integral part of 
lossless quantum data compression, and were investigated first by Schumacher \cite{Schumacher1994} and later was formalized by Schumacher and Westmoreland \cite{Schumacher2001Sep}.

The quantum analogue of classical Huffman code was investigated by Braunstein, Fuchs, Gottesman and Lo \cite{Braunstein2000Jul}. Schumacher and Westmoreland \cite{Schumacher2001Sep} proved that for an iid source,
the quantum Huffman code is optimal
in the sense that it minimizes the average codeword length which asymptotically approaches the von~Neumann entropy of 
the source as the block size increases to infinity.
For their proof they used the formalism of condensable 
codes. Bellomo, Bosyk, Holik and Zozor \cite[Theorem 2]{Bellomo2017Nov}
proved later that the quantum Huffman code minimizes the 
average codeword length of a single block for any given 
quantum source, (not necessarily iid). There has also been some 
parallel works \cite{Muller2008Jan, Chou2006} towards minimizing the average base length 
(which is the maximum length of an indeterminate-length codeword used) for quantum sources as opposed to minimizing the average codeword length. 

During the recent years, there have been many developments 
in quantum communication and data compression. We refer the 
reader to the recent textbooks \cite{Khatri2020Nov} by Khatri and 
Wilde and \cite{Hajdusek2023Nov} by Hajdu\u{s}ek and Van Meter 
for a comprehensive review on the subject.

In classical (or quantum) data transmission, the number of bits (or qubits) required to transmit any given classical (or quantum) message is considered as a resource. So, it is desirable to transmit a given classical (or quantum) message using as few bits (or qubits) as possible. Therefore, in this paper, we focus on quantum block data compression in an effort to minimize the average codeword length for a given length of the quantum message emitted by a given quantum source. Precisely, we study \textit{completely lossless} and \textit{indeterminate-length} quantum block data compression of pure states emitted by a \textit{quantum stochastic source}, which is not necessarily iid. We will follow the formalism in \cite{Bellomo2017Nov} as opposed to the zero-extended forms of \cite{Schumacher2001Sep} in order to represent indeterminate-length codewords.

Consider a \textit{quantum stochastic source} (not necessarily iid) and a tensor product of $q \in \N$ many pure states emitted from the source. Fix $m \in \N$ as the number of blocks and $l$ as the block size such that $q=ml$. Consider a sequence of $m$ many isometries (also called quantum codes) $U_1, \ldots, U_m$ such that for $k \in \{1, \ldots, m\}$, each $U_k$ encodes the $k^{th}$ block consisting of the tensor product of $l$ many pure states into the Fock space. The given sequence of $m$ isometries thus encodes the tensor product of $ml$ many states to the state obtained by the concatenation of the images of all $U_k$'s. In order to decode the encoded state in a lossless manner, the sequence of quantum codes must be ``uniquely decodable'' (also called completely lossless) quantum code. Additionally, this kind of approach of block encoding of the pure states raises some interesting questions: for a given message consisting of the tensor product of $q = ml$ many pure states emitted from a given quantum stochastic source $\mathcal{S}$, what is the minimum average codeword length per symbol computed over all uniquely decodable quantum codes $U_1, \ldots, U_m$ such that each $U_k$ encodes the $k^{th}$ block of $l$ many pure states? How does the minimum average codeword length per symbol computed over all uniquely decodable quantum codes change if we change $m$ and $l$ such that their product $ml$ always remains constant?

To this end, in this paper we extend the notion of uniquely decodable (or completely lossless) quantum codes to be used for quantum block data compression. As the main result (Theorem~\ref{Thm:main}), for a fixed $ml$ many pure states emitted by a given quantum stochastic source, we derive the optimal lower bound of the average codeword length over a subset of uniquely decodable quantum codes called ``special block codes", which are applied to encode the pure states in $m$ many blocks each of block size $l$. We hope that our work would be supportive in the quest towards answering the above questions, which have been left open in the current work. 

We explicitly show that in order to achieve the optimal lower bound, for every $k \in \{1, \ldots, m\}$, one needs to dynamically adapt the isometry used for encoding the $k^{th}$ block of $l$ many pure states based on the exact sequence of pure states emitted in the first $(k-1)$ blocks. However, since the pure states emitted by the source are not necessarily orthogonal, the task of state discrimination of non-orthogonal states can be difficult. So, in Section~\ref{Stationary}, we give a simplified non-adaptive compression technique based on a subset of special block codes called ``constrained special block codes" for quantum stochastic sources that emit pure states where the encoder can fix one isometry per block of states ignoring the emissions in the previous blocks, and in Theorem~\ref{Thm:derived} we derive the optimal lower bound of the average codeword length over all constrained special block codes which are applied to encode the pure states in $m$ many blocks each of block size $l$. In the Corollary~\ref{corollary_stationary}, we show that for quantum stationary sources in particular, the optimal lower bound of the average codeword length per symbol computed over all constrained special block codes equals the von-Neumann entropy rate of the source for an asymptotically long block size.

\section{Preliminary definitions} 

A classical binary encoding for a finite alphabet set $A$ is a map from the set $A$ to the set of all finite strings
of $0$'s and $1$'s. The images of the elements of the alphabet via an encoding are called codewords. 
Every encoding of an alphabet set $A$ can be extended to the set of finite strings of elements of $A$
by concatenation. An encoding is called uniquely decodable if its extension is one-to-one. An encoding
is called instantaneous (or prefix-free) if no codeword is the initial part of another codeword. Instantaneous encodings are
special cases of uniquely decodable ones. If the 
information source is independent and identically distributed (iid), then a probability distribution is assigned to the alphabet,
and the average length of the code is defined to be equal to the expectation of the length of its codewords, where the 
length of each codeword is the number of its bits. The Huffman encoding \cite{Huffman1952Sep} is an 
optimal instantaneous classical encoding and its average codeword length $l_{avg}$ satisfies
$$
H(X) \leq l_{avg} \leq H(X) + 1  
$$
where $H(X)$ is Shannon entropy of a classical source.

Now we look at the analogous situation of the quantum data compression.
Throughout this article, we fix a Hilbert space $\mathcal{H}$ of dimension $d$ and a set  $\{\ket{s_n}\}_{n=1}^N$  of pure 
states of  $\mathcal{H}$. We assume without loss of generality that the vectors $\{\ket{s_n}\}_{n=1}^N$  span 
the Hilbert space $\mathcal{H}$.
In this section we recall the definitions of  quantum stochastic sources, quantum stochastic ensembles, quantum codes and average codeword length which have been considered in some of the references mentioned above.

\begin{Def}\label{quantum_source}
A \textbf{quantum stochastic source} $\mathcal{S}$ consists of a set of 
pure states $\{\ket{s_n}\}_{n=1}^N$ of a 
Hilbert space $\mathcal{H}$, and a stochastic process 
$X=(X_n)_{n=1}^\infty$, where each $X_n$ is a random 
variable which takes values in $\{ 1, 2, \ldots , N \}$. 
At every positive integer time $n$ the state $\ket{s_{X_n}}$
is emitted from the quantum source. If $p$ denotes the 
probability distribution of the stochastic process $X$, then
for every $q \in \N$ and 
$(n_1, \ldots, n_q) \in \{ 1, \ldots , N \}^q$, we have that 
\begin{equation} \label{stochastic_px}
p(n_1, \ldots , n_q)= \mathbb{P}(X_1=n_1, \ldots , X_q=n_q).
\end{equation}
Also, the conditional probability distribution of the stochastic process $X$ is defined for any $q \geq 2$ and 
$(n_1, \ldots, n_q) \in \{ 1, \ldots ,N \}^q$ by
$$
p(n_q|n_{q-1}, \ldots ,n_1)= \P(X_q=n_q|X_{q-1}=n_{q-1}, \ldots ,X_1=n_1).
$$
\end{Def}

\begin{Def} \label{stationary_source}
A quantum stochastic source $\mathcal{S}$ is \textbf{stationary (or translation-invariant)} if the associated stochastic process $X$ is invariant with respect to the translation map, i.e. 
\begin{equation} \label{translation-invariance}
\mathbb{P}(X_1=n_1, \ldots , X_q=n_q) = \mathbb{P}(X_{k+1}=n_1, \ldots , X_{k+q}=n_q)
\end{equation}
for every $k \in \N$, 
$q \in \N$ and 
$(n_1, \ldots, n_q) \in \{ 1, \ldots , N \}^q$.
\end{Def}
Setting $q=1$ in Equation~\eqref{translation-invariance}, we get 
$
\mathbb{P}(X_1=n_1) = \mathbb{P}(X_{k+1}=n_1)
$
for every $k \in \N$ and 
$n_1 \in \{ 1, \ldots , N \}$, which shows that for every quantum stationary source, the random variables are identically distributed. Notice however that the random variables are not necessarily independent. We will deal with quantum stationary sources in Section~\ref{Stationary}.
\begin {Def} \label{general_ensemble}
Given a quantum stochastic source $\mathcal{S}$ which consists of a set of 
pure states $\{\ket{s_n}\}_{n=1}^N$ of a 
Hilbert space $\mathcal{H}$, and a stochastic process 
$X=(X_n)_{n=1}^\infty$ with probability distribution $p$,
we define the \textbf{ensemble state} $\rho_q$ for $q \in \N$ many emissions from the source as follows: 
$$
\rho_q=\sum_{n_1,\ldots,n_q=1}^N p(n_1,\ldots,n_q)\ketbra{s_{n_1}\cdots s_{n_q}}{s_{n_1}\cdots s_{n_q}} .
$$
\end{Def}

In this article, we are concerned with quantum block encodings 
and we assume that each block that we encode is a tensor 
product of $l$ many pure states which are chosen from the set 
$\{ \ket{s_n}: n =1, \ldots , N \}$ via the 
stochastic process $(X_n)_{n \in \mathbb{N}}$. Usually we 
express that fact by saying that \textbf{the block size} is equal to $l$. Thus, we often
look at multiples of $l$ many pure states. This can be 
observed in the next definition.

\begin{Def} \label{block_ensemble_state}
Let a quantum stochastic source $\mathcal{S}$ which consists of a set of 
pure states $\{\ket{s_n}\}_{n=1}^N$ of a 
Hilbert space $\mathcal{H}$, and a stochastic process 
$X=(X_n)_{n=1}^\infty$ with probability distribution $p$.
For a fixed block size $l \in \N$, an integer $k \geq 1$, and a fixed sequence 
$(n_1, \ldots , n_{(k-1)l}) \in \{ 1, \ldots , N\}^{(k-1)l}$
define the 
$\boldsymbol{k}^{\boldsymbol{th}}$ \textbf{block conditional ensemble state} $\rho^{n_1, \ldots ,n_{(k-1)l}}$ as
\begin{align} 
\rho^{n_1, \ldots ,n_{(k-1)l}} & = \sum_{n_{(k-1)l+1},\ldots, n_{kl}=1}^N \frac{p(n_1,\ldots,n_{kl})}{p(n_1,\ldots,n_{(k-1)l})} \ketbra{s_{n_{(k-1)l+1}} \cdots s_{n_{kl}}}  \nonumber \\ 
& = \sum_{n_{(k-1)l+1},\ldots, n_{kl}=1}^N  p(n_{(k-1)l + 1},\ldots,n_{kl} | n_1,\ldots,n_{(k-1)l})  \ketbra{s_{n_{(k-1)l+1}} \cdots s_{n_{kl}}},\label{kth_block_ensemble}
\end{align}
where each coefficient $p(n_{(k-1)l + 1},\ldots,n_{kl} | n_1,\ldots,n_{(k-1)l})$ above represents the conditional probability that the sequence of states $\ket{s_{n_{(k-1)l + 1}}, \ldots, s_{n_{kl}}}$ gets emitted in the $k^{th}$ block given the emission of the sequence of states $\ket{s_{n_1}, \ldots, s_{n_{(k-1)l}}}$ in the first $(k-1)$ many blocks.
Consequently, the $\boldsymbol{k}^{\boldsymbol{th}}$ \textbf{block ensemble state} $\rho^k$ is given by
\begin{equation}
\rho^k = \sum_{n_1, \ldots, n_{(k-1)l} =1}^N p(n_1, \ldots, n_{(k-1)l}) \rho^{n_1, \ldots, n_{(k-1)l}}  \label{block_ensemble}
\end{equation}
\end{Def}
In what follows, for $k=1$, the state 
$\rho^{n_1, \ldots , n_{(k-1)l}}$ is interpreted as the 
ensemble state $\rho_\ell$ (or equivalently, $\rho^1$) of the first block, $p(n_1, \ldots, n_{(k-1)l})$ is set to 1, and the string $n_1, \ldots , n_{(k-1)l}$ is the empty string and is dropped from the superscript 
wherever applicable.

Recall that \textbf{Fock space} $\bm{(\mathbb{C}^2)^{\oplus}}$ is
associated to $\C^2$ and is defined to be the Hilbert space
\begin{equation}
(\mathbb{C}^2)^{\oplus} = \bigoplus_{k=0}^\infty (\mathbb{C}^2)^{\otimes k},
\end{equation}
where $(\mathbb{C}^2)^{\otimes 0}= \mathbb{C}$. We simply 
refer to the Fock space associated to $\mathbb{C}^2$ as the 
\textbf{Fock space}. The 
\textbf{standard basis} of the Fock space 
$(\mathbb{C}^2)^\oplus$ consists of qubit strings of various 
lengths, (including the $0$-length $\ket{\emptyset}$). 
If a quantum state $\ket{\psi}$ includes the superposition of only the qubit strings of some length $\ell \in \N$, i.e. $\ket{\psi} \in (\mathbb{C}^2)^{\otimes \ell}$, we refer to such state as \textbf{length state}.

Now, we define the notion of quantum codes.

\begin{Def} \label{def:quantum_code}
Let 
$\mathcal{K}$ be a Hilbert space of dimension $D$. 
A \textbf{quantum code} 
on $\mathcal{K}$ is a linear isometry $U:\mathcal{K} \to (\mathbb{C}^2)^\oplus$.
Thus, for every quantum code $U$ on $\mathcal{K}$, the dimension of its range is equal to $D$, and if $(\ket{\psi_i})_{i=1}^D$ is any orthonormal sequence in its range, then $U$ has the form
\begin {equation} \label{quantum_code}
U=\sum_{i=1}^{D} \ketbra{\psi_i}{e_i},
\end {equation}
where $(\ket{e_i})_{i=1}^D$ is an orthonormal basis of 
$\mathcal{K}$. The quantum state of the form $U\ket{s}$ obtained by applying $U$ to any pure state $\ket{s} \in \mathcal{K}$ is called \textbf{codeword}.
In particular, if the codeword is a length state, it is called \textbf{length codeword}. However, in general, a codeword can be a superposition of lengths states of different lengths, in which case the codeword is called \textbf{indeterminate-length codeword}.
\end{Def}
As a result, the range of a quantum code is the finite dimensional subspace of the Fock space that is spanned by the orthonormal vectors $\ket{\psi_i}$'s, possibly of different lengths. The notion of concatenation of codewords is defined as follows:
 
\begin{Def} \label{concatenation}
\textbf{Concatenation} of standard basis vectors of the Fock space is defined as the 
concatenation of classical bit strings. Then, \textbf{concatenation}
of the linear combinations of basis vectors of the Fock space is defined by making the concatenation distributive over addition and scalar multiplication, as 
in \cite[Definition 2.4]{Muller2008}  and  
\cite[Section V]{Muller2008Jan}.
Hence, for $m \in \N$ and $x_1, \ldots , x_m \in (\mathbb{C}^2)^\oplus$, their concatenation $x_1 \circ x_2 \circ \cdots \circ x_m$ is an element of  $(\mathbb{C}^2)^\oplus$.
Let $\mathcal{K}$ be a finite dimensional Hilbert space, 
$m \in \N$, and 
$U_1,  \ldots , U_m$ be a sequence of quantum codes with
$U_k: \mathcal{K} \to (\mathbb{C}^2)^\oplus$ for all 
$k =1, \ldots , m$. Then, the \textbf{concatenation} of the 
sequence $(U_k)_{k=1}^m$ is defined to be the 
linear map $U_1 \circ \cdots \circ U_m :\mathcal{K}^{\otimes m} \to (\mathbb{C}^2)^\oplus$ 
given by
\begin{equation} \label{extended_quantum}
(U_1 \circ \cdots \circ U_m)\ket{s_1 \otimes \ldots \otimes s_m}=U_1\ket{s_1} \circ  \cdots \circ U_m\ket{s_m} \text{ for all pure states}  \ \ket{s_1s_2\ldots s_m} \in \mathcal{K}^{\otimes m} ,
\end{equation}
where $\circ$ denotes the concatenation operation. 
\end{Def}

\begin{Rmk}
It is important to note that concatenation of arbitrary pure quantum states does not always preserve the norm, and hence may not be well defined. It's not hard to see that the norm is always preserved if either or both of the pure quantum states being concatenated are length states. 
However, if both pure quantum states being concatenated are indeterminate-length states, then the norm may not be preserved (for example, concatenation of the pure states $\frac{1}{\sqrt 2} (\ket{0} + \ket{00})$ and $\frac{1}{\sqrt 2} (\ket{0} - \ket{00}$ is not normalized). This has been well explored in \cite[after Definition 2.4]{Muller2008}.
On the other hand, Remark~\ref{rmk:well-definition_of_concatenation} will give that on certain subspaces of the Fock space, 
concatenations of fixed number of vectors are well defined.
\end{Rmk}

Thus,
if $(U_k)_{k=1}^m$ is a finite sequence of quantum codes, then $U_1 \circ  \cdots \circ U_m$ in general may not be an 
isometry. It is important that the concatenation
$U_1 \circ \cdots \circ U_m$ remains an isometry, because if $U_k$ is used to encode the $k^{\text{th}}$ block 
for each $k =1, \ldots , m$, then 
$U_1 \circ \cdots \circ U_m$ will be 
used to encode the tensor product of $m$ many blocks 
of pure states, so its inverse will be used to decode those 
blocks back in a lossless fashion. This lossless way of 
decoding is possible if the concatenation 
$U_1 \circ \cdots \circ U_m $ is uniquely decodable which is explained below.

We now extend the notion, in some sense, of uniquely decodable quantum codes considered in \cite{Hayashi2010Jan, Hayashi, Bellomo2017Nov, Androulakis2019Sep} from one isometry to a sequence of isometries.

\begin{Def} \label{uniquely_decodable}
Let  $\mathcal{K}$ be a finite dimensional Hilbert space, and 
$U_1, \ldots , U_m$ be a finite sequence of quantum codes such 
that $U_k:\mathcal{K} \to (\mathbb{C}^2)^\oplus$ for every 
$k =1, \ldots , m$. The sequence 
of quantum codes $(U_k)_{k=1}^m$ is called \textbf{uniquely decodable} if and only if their concatenation
$U_1 \circ \cdots \circ U_m: \mathcal{K}^{\otimes m} \to (\mathbb{C}^2)^\oplus$ is an isometry. 
\end{Def}

We now give a characterization of uniquely 
decodable quantum codes that follows directly from the 
Definitions~\ref{concatenation} and 
\ref{uniquely_decodable}.

\begin{Thm} \label{m-orthonormal}
Let $\mathcal{K}$ be a $D$-dimensional Hilbert space and 
$(U_k)_{k=1}^m$ be a sequence of isometries 
$U_k: \mathcal{K} \to (C^2)^{\oplus}$ with 
$U_k = \sum_{i=1}^D \ketbra{\psi_i^k}{e_i^k}$ (as in Equation~\eqref{quantum_code}) for $k=1,\ldots, m$.
Then, $(U_k)_{k=1}^m$ is uniquely decodable if and only if 
the collection
$$
\Big\{ \ket{\psi_{v_1}^1 \circ \cdots \circ \psi_{v_m}^m} : 
( v_1, \ldots , v_m ) \in \{ 1, \ldots , D\}^m
\Big\}
$$
forms an orthonormal set.
\end{Thm}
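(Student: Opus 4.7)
The plan is to reduce the statement to the familiar criterion that a linear map between Hilbert spaces is an isometry if and only if it maps some orthonormal basis of the domain to an orthonormal set in the codomain. Since Definition~\ref{concatenation} already guarantees that $U_1 \circ \cdots \circ U_m$ is a well-defined linear map from $\mathcal{K}^{\otimes m}$ into $(\mathbb{C}^2)^{\oplus}$ (via distributivity of concatenation over addition and scalar multiplication), the question of whether the sequence $(U_k)_{k=1}^m$ is uniquely decodable is precisely the question of whether this specific linear map is an isometry.

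The first step is to fix a natural orthonormal basis of $\mathcal{K}^{\otimes m}$. For each $k \in \{1, \ldots, m\}$, the vectors $(\ket{e_i^k})_{i=1}^D$ form an orthonormal basis of $\mathcal{K}$, and therefore the product vectors
\[
\Big\{ \ket{e_{v_1}^1 \otimes \cdots \otimes e_{v_m}^m} : (v_1, \ldots, v_m) \in \{1, \ldots, D\}^m \Big\}
\]
form an orthonormal basis of $\mathcal{K}^{\otimes m}$. The second step is to compute the image of this basis under $U_1 \circ \cdots \circ U_m$. Using $U_k \ket{e_{v_k}^k} = \ket{\psi_{v_k}^k}$ together with the defining equation~\eqref{extended_quantum} of Definition~\ref{concatenation}, I get
\[
(U_1 \circ \cdots \circ U_m) \ket{e_{v_1}^1 \otimes \cdots \otimes e_{v_m}^m} = U_1 \ket{e_{v_1}^1} \circ \cdots \circ U_m \ket{e_{v_m}^m} = \ket{\psi_{v_1}^1 \circ \cdots \circ \psi_{v_m}^m}
\]
for every multi-index $(v_1, \ldots, v_m) \in \{1, \ldots, D\}^m$. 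Thus, the collection appearing in the statement of the theorem is exactly the image of the chosen orthonormal basis of $\mathcal{K}^{\otimes m}$ under $U_1 \circ \cdots \circ U_m$.

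The final step is to invoke the orthonormal basis criterion: the linear map $U_1 \circ \cdots \circ U_m$ is an isometry if and only if it sends the chosen orthonormal basis of $\mathcal{K}^{\otimes m}$ to an orthonormal set in $(\mathbb{C}^2)^{\oplus}$. Combined with the computation above and with Definition~\ref{uniquely_decodable}, this gives the equivalence claimed in the theorem. I do not anticipate a real obstacle here: the subtlety flagged in the remark (that concatenation of arbitrary indeterminate-length states need not preserve norms) is exactly what the orthonormality condition on the concatenated basis vectors $\ket{\psi_{v_1}^1 \circ \cdots \circ \psi_{v_m}^m}$ rules out, so the norm-preservation of $U_1 \circ \cdots \circ U_m$ on all of $\mathcal{K}^{\otimes m}$ follows automatically by linear extension once orthonormality on the basis is assumed.
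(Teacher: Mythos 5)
Your proposal is correct and takes essentially the same route as the paper: both directions rest on the observation that $U_1 \circ \cdots \circ U_m$ sends the product basis $\ket{e_{v_1}^1 \otimes \cdots \otimes e_{v_m}^m}$ of $\mathcal{K}^{\otimes m}$ to the concatenations $\ket{\psi_{v_1}^1 \circ \cdots \circ \psi_{v_m}^m}$, combined with the standard criterion that a linear map is an isometry if and only if it carries an orthonormal basis to an orthonormal set. The only difference is that the paper verifies the \lq\lq if\rq\rq\ direction by hand, expanding arbitrary vectors in the ranges of the $U_k$'s and checking norm preservation, which yields as a byproduct the inner-product factorization \eqref{concat_tensor} reused later in Remark~\ref{rmk:well-definition_of_concatenation}; your shorter appeal to the orthonormal-basis criterion is legitimate but does not produce that extra identity.
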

Proof: See Appendix \ref{proof_uniquely_decodable}.

Motivated from Theorem~\ref{m-orthonormal} we introduce the following definition.

\begin{Def} \label{jointly_orthonormal}
We say that a sequence of pure states 
$(\ket{\psi_i})_{i=1}^D \subseteq (\mathbb{C}^2)^\oplus$ is 
\textbf{jointly orthonormal}, if
and only if for every $m \in \N, \{\ket{\psi_{v_1} \circ \cdots \circ \psi_{v_m}}: 
( v_1, \ldots , v_m ) \in \{ 1, \ldots , D\}^m \} $ is an orthonormal set.


\end{Def}
The next remark clarifies Definition~\ref{jointly_orthonormal}.
\begin{Rmk}
Notice that every jointly orthonormal
sequence is orthonormal (which is obtained by taking $m=1$). However, the converse is not always true. Take the states $\frac{1}{\sqrt 2} (\ket{0} + \ket{00})$ and $\frac{1}{\sqrt 2} (\ket{0} - \ket{00}$ as a counter example.
\end{Rmk}

One can naturally construct a jointly orthonormal sequence by taking each $\ket{\psi_i}$ to be the bit strings produced from a classical uniquely decodable, or a classical prefix-free code. Another possible way to construct these sequences would be by taking the basis of a prefix-free Hilbert space as described in \cite{Muller2008}.

Consider a quantum stochastic source which contains an alphabet of $N$ many pure states $(\ket{s_i})_{i=1}^N$ 
that span a 
Hilbert space $\mathcal{H}$ of dimension $d$. Let $l, m \in \mathbb{N}$ where
$l$ denotes the block size, and $m$ denotes the number 
of blocks. Fix a jointly orthonormal sequence $\{\ket{\psi_i}\}_{i=1}^{d^l}$ such that each $\ket{\psi_i}$ is a length state. Consider the tensor product state $\ket{s_{n_1}} \ket{s_{n_2}} \ldots  \ket{s_{n_{ml}}}$ emitted from the quantum stochastic source, for some 
sequence $n_1,n_2, \ldots , n_{ml} \in \{ 1, \ldots , N \}$. Lastly, consider a sequence of quantum code $(U_k)_{k=1}^m$ where each isometry $U_k$ for $k \in \{1, \ldots, m\}$ is constructed by taking an arbitrary orthonormal basis $\{e_i^k\}_{i=1}^{d^l}$ but the same jointly orthonormal sequence $\{\ket{\psi_i}\}_{i=1}^{d^l}$ such 
that $U_k= \sum_{i=1}^{d^l} \ketbra{\psi_i}{e_i^k}$, and is used to encode the $k^\text{th}$ block consisting of a tensor product of $l$ many pure states. Hence, the tensor product state $\ket{s_{n_1}} \ket{s_{n_2}} \ldots  \ket{s_{n_{ml}}}$ is encoded as the concatenation 
$$
U_1(\ket{s_{n_1}} \cdots \ket{s_{n_l}}) \circ
U_2 (\ket{s_{n_{l+1}} } \cdots \ket{s_{n_{2l}}}) \circ \cdots \circ 
U_m 
(\ket{s_{(m-1)l+1}} \cdots \ket{s_{ml}}).
$$
By the forward direction of quantum Kraft-McMillan Inequality (Theorem~\ref{generalized_Kraft}), the sequence of quantum code $(U_k)_{k=1}^m$ is uniquely decodable, and therefore the concatenation above is well defined.

Since we are interested in minimizing the average codeword length, each isometry will be used to encode the average state of a block as opposed to a specific tensor product of $l$ many pure states. 
Since the probability distribution for each block emitted by a quantum stochastic source is dependent on the entire sequence of states emitted up to the previous block, any $k^\text{th}$-block conditional ensemble state $\rho^{n_1, \ldots, n_{(k-1)l}}$ (as in Definition~\ref{block_ensemble_state}) 
depends on the distinct sequence of states 
$\ket{s_{n_1} \cdots s_{n_{(k-1)l}}}$ which is emitted up to that particular block, (without including it). 
Therefore, we collect the isometries of the form $U^{n_1, \ldots , n_{(k-1)l}} = \sum_{i=1}^{d^l} \ketbra{\psi_i}{e_i^{n_1, \ldots , n_{(k-1)l}}}$, one for each distinct $k^\text{th}$-block conditional ensemble state to form a special block code, as we specify next. 

\begin{Def} \label{special_block_sequence}
Consider a quantum stochastic source $\mathcal{S}$ which contains an alphabet of $N$ many pure states $(\ket{s_i})_{i=1}^N$ 
that span a 
Hilbert space $\mathcal{H}$ of dimension $d$. Let $l, m \in \mathbb{N}$ where
$l$ denotes the block size, and $m$ denotes the number 
of blocks. A \textbf{special block code} is a family of
isometries 
$$
\mathcal{U}= \Big\{ U^{n_1, \ldots , n_{(k-1)l}}: 1 \leq k \leq m, \, n_1, \ldots , n_{(k-1)l} \in \{ 1, \ldots , N \} \Big\},
$$
such that every isometry used in the family $\mathcal{U}$ has a common jointly orthonormal sequence of 
length codewords. Thus more explicitly,  
there exists a jointly orthonormal sequence of  length 
codewords $(\ket{\psi_i})_{i=1}^{d^l} \subseteq (\mathbb{C}^2)^\oplus$, (see Definitions~\ref{jointly_orthonormal} and \ref{def:quantum_code}), and an orthonormal sequence
$(\ket{e_i^{n_1, \ldots , n_{(k-1)l}}})_{i=1}^{d^l}$ for $ 1 \leq k \leq m$ and $n_1, \ldots , n_{(k-1)l} \in \{ 1, \ldots , N \}$ such that 
$$
U^{n_1, \ldots , n_{(k-1)l}} = \sum_{i=1}^{d^l} \ketbra{\psi_i}{e_i^{n_1, \ldots , n_{(k-1)l}}} .
$$
As a special case, for every $k \in \{1, \ldots, m \}$, if $U^{n_1, \ldots, n_{(k-1)l}} = U_k = \sum_{i=1}^{d^l} \ketbra{\psi_i}{e_i^k}$ for all $n_1, \ldots , n_{(k-1)l} \in \{ 1, \ldots , N \}$ where $(\ket{e_i^k})_{i=1}^{d^l}$ is an arbitrary orthonormal sequence fixed for the $k^{th}$ block and independent of the sequence of pure states $\ket{s_{n_1} \cdots s_{n_{(k-1)l}}}$ emitted in the first $k-1$ blocks, then we call such a special block code a \textbf{constrained special block code}. This notion will be used in Section~\ref{Stationary}.
\end{Def}

\begin{Rmk}
Notice that in Definition~\ref{special_block_sequence}, the superscript $n_1, \ldots, n_{(k-1)l}$ which appears in the isometry $U^{n_1, \ldots, n_{(k-1)l}}$ and the orthonormal basis $(\ket{e_i^{n_1, \ldots , n_{(k-1)l}}})_{i=1}^{d^l}$ indicates that these quantities are allowed to depend on the sequence of pure states $\ket{s_{n_1} \cdots s_{n_{(k-1)l}}}$ emitted in the first $k-1$ blocks. In that case, such special block codes are termed as \textbf{adaptive} (or \textbf{dynamic}) special block codes. Our Theorem \ref{Thm:main} states that in order to achieve the optimal lower bound of the average codeword length for encoding $m$ many blocks of states with $l$ many pure states per block emitted by the given quantum stochastic source, one needs to use such an adaptive special block code. In this optimal special block code, the orthonormal basis $(\ket{e_i^{n_1, \ldots , n_{(k-1)l}}})_{i=1}^{d^l}$ for each isometry $U^{n_1, \ldots, n_{(k-1)l}}$ is chosen to coincide with the eigenvectors $(\ket{\lambda_i^{n_1, \ldots, n_{(k-1)l}}})_{i=1}^{d^l}$ of the $k^\text{th}$-block conditional ensemble state $\rho^{n_1, \ldots, n_{(k-1)l}}$.
\end{Rmk}

\begin{Rmk}
The constraint that all the isometries in a given special block code $\mathcal{U}$ have a common sequence of jointly orthonormal length codewords ensures that any sequence 
$(U_k)_{k=1}^m$ of quantum codes, which are chosen from a special block code, is uniquely decodable (by Theorem~\ref{m-orthonormal}). Nevertheless, one could construct more general uniquely decodable quantum codes which do not necessarily meet this requirement, but we do not consider them in this paper. Such an example is given in the remark below.
\end{Rmk}

\begin{Rmk}
Consider a sequence $(U_k)_{k=1}^m$ of $m$ many isometries. Let $O$ and $E$ denote the set of odd and even numbers respectively from $1$ through $m$. 
Let $U^x = \sum_{i=1}^D \ketbra{\psi_i^x}{e_i^x}$ and $U^y = \sum_{i=1}^D \ketbra{\psi_i^y}{e_i^y}$ such that $\ket{\psi_i^x} = \ket{0}^{\otimes i}$ and 
$\ket{\psi_i^y} = \ket{1}^{\otimes i}$ for all $x \in O$, $y \in E$, and $i \in \{1, \ldots, D\}$. It's easy to verify that 
the collection
$$
\Big\{ \ket{\psi_{v_1}^1 \circ \cdots \circ \psi_{v_m}^m} : 
( v_1, \ldots , v_m ) \in \{ 1, \ldots , D\}^m
\Big\}
$$
forms an orthonormal set. Therefore, the sequence $(U_k)_{k=1}^m$ of quantum codes is uniquely decodable by Theorem~\ref{m-orthonormal}. However, the quantum codes $U_k$'s do not have a common sequence of jointly orthonormal length codewords and hence the sequence $(U_k)_{k=1}^m$ is not a special block code.
\end{Rmk}

Now we define the indeterminate length of any state on the 
Fock space.

\begin{Def} \label{Def:indetermine_length}
Define the \textbf{length observable} $\Lambda$ acting on $(\mathbb{C}^2)^\oplus$ to be the unbounded linear operator given by
\begin {equation} \label{length_observ}
\Lambda=\sum_{\ell =0}^\infty \ell \Pi_\ell,
\end{equation}
where $\Pi_\ell$ is the orthogonal projection from
$(\mathbb{C}^2)^\oplus$ onto the subspace 
$(\mathbb{C}^2)^{\otimes \ell}$. 

For any state $\rho$ on the Fock space, define its 
\textbf{indeterminate length} to be the expectation of the length observable with respect to the state $\rho$, i.e.
$$
\Tr (\rho \Lambda),
$$
where $\Tr$ denotes the trace on the Fock space.
\end{Def}

This notion of indeterminate length of a state 
can be used to define
the average codeword length of the special block code.

\begin {Def}\label{average_codeword}
Consider a quantum stochastic source $\mathcal{S}$ 
which contains an alphabet of $N$ many pure states $(\ket{s_i})_{i=1}^N$ 
which span a 
Hilbert space $\mathcal{H}$, and a stochastic process $X$ 
with probability mass function $p$. Let $l, m \in \mathbb{N}$ where
$l$ denotes the block size, and $m$ denotes the number 
of blocks. Consider a special block code
$$
\mathcal{U}=\Big\{ U^{n_1, \ldots , n_{(k-1)l}}: 1 \leq k \leq m, n_1, \ldots , n_{(k-1)l} \in \{ 1, \ldots , N \} \Big\},
$$
as defined in Definition~\ref{special_block_sequence}.
Recall from Definition~\ref{general_ensemble} that the 
ensemble state for the total of $ml$ many emissions by
the quantum stochastic source is equal to
$$
\rho_{ml}=\sum_{n_1,\ldots,n_{ml}=1}^N p(n_1,\ldots,n_{ml})\ketbra{s_{n_1}\cdots s_{n_{ml}}}{s_{n_1}\cdots s_{n_{ml}}} .
$$
We will denote by $\bm{L(\mathcal{U})}$ the \textbf{average codeword length of the special block code} $\bm{\mathcal{U}}$, which is defined 
to be equal to
\begin{align*}
\sum_{n_1,\ldots,n_{ml}=1}^N p(n_1,\ldots,n_{ml})
 & \Tr \Bigg( \ket{U (s_{n_1} \cdots s_{n_{l}}) \circ
U^{n_1, \ldots , n_l} (s_{n_{l+1}} \cdots s_{n_{2l}}) \circ \cdots \circ U^{{n_1,\ldots , n_{(m-1)l}}} (s_{n_{(m-1)l+1}} \cdots s_{n_{ml}})} \\
 &   \bra{
 U(s_{n_1} \cdots s_{n_{l}}) \circ 
 U^{n_1, \ldots , n_{n_l}} 
		(s_{n_{l+1}} \cdots s_{n_{2l}}) \circ \cdots 
\circ U^{{n_1,\ldots , n_{(m-1)l}}} (s_{n_{(m-1)l+1}} \cdots s_{n_{ml}}) } \Lambda \Bigg)
\end{align*}
Let $\bm{ILS(\mathcal{S}, m,l)}$
denote the \textbf{infimum} of the set 
containing  $ L(\mathcal{U})$ for every
\textbf{special} block code $\mathcal{U}$ that is used to encode $ml$ many states emitted by $\mathcal{S}$ into $m$ blocks each of size $l$.
Similarly we define $\bm{ILC(\mathcal{S}, m,l)}$ for \textbf{constrained} special block codes.
\end {Def}

Finally, we are ready to state the main result of our 
article. 
\begin{Thm} \label{Thm:main}
Consider a  quantum stochastic source $\mathcal{S}$ 
consisting of an alphabet of $N$ many pure states
spanning a $d$-dimensional Hilbert space $\mathcal{H}$, 
and a stochastic process $X$ having mass function $p$, 
as in  
Definition~\ref{quantum_source}. Fix $m,l \in \N$. 
Then $ILS(\mathcal{S},m,l)$ can be computed as follows: 
For each $k =1, \ldots , m$,
and a sequence $n_1, \ldots , n_{(k-1)l}$ of integers
chosen from the set $\{ 1, \ldots , N \}$, let 
$(\lambda_i^{n_1, \ldots , n_{(k-1)l}})_{i=1}^{d^l}$ be the 
eigenvalues of the $k^{\text{th}}$ block 
conditional ensemble state $\rho^{n_1, \ldots, n_{(k-1)l}}$,
arranged in decreasing order, and 
$(\ket{\lambda_i^{n_1, \ldots, n_{(k-1)l}}})_{i=1}^{d^l}$ be the corresponding eigenvectors. 

Let 
$$
\mathfrak{L}= \left\{ (\ell_1, \ldots , \ell_{d^l}) : \ell_i \in \N \cup \{ 0 \} \text{ for all }i,\, \ell_1 \leq \ell_2 \leq \cdots \leq \ell_{d^l}, \text{ and } 
\sum_{i=1}^{d^l} 2^{-\ell_i} \leq 1 \right\}.
$$	

Define a function $\mathcal{F_S}: \mathfrak{L} \to [0,\infty)$ by 
\begin{equation} \label{L}
\mathcal{F_S}((\ell_i)_{i=1}^{d^l}) :=
\sum_{j=2}^{m}\left(\sum_{n_1,\ldots ,n_{(j-1)l}=1}^N
p(n_1,\ldots ,n_{(j-1)l}) \sum_{i= 1}^{d^l} \lambda_{i}^{n_1,\ldots ,n_{(j-1)l}} \ell_i \right) + \sum_{i=1}^{d^l} \lambda_i \ell_i.
\end{equation}

Then,  
\begin{equation} \label{E:Best_ACL}
ILS(\mathcal{S},m,l)= \min\{ \mathcal{F_S}((\ell_i)_{i=1}^{d^l}) : (\ell_i)_{i=1}^{d^l} \in \mathfrak{L} \}.
\end{equation}

Moreover, the infimum defining $ILS(\mathcal{S},m,l)$ is actually a minimum,
i.e., there exists a special block code
$$
\mathcal{V}= \Big\{ V^{n_1, \ldots , n_{(k-1)l}}: 
k \in \{ 1, \ldots , m \},\text{ and }
n_1, \ldots ,n_{(k-1)l} \in \{ 1, \ldots , N \} \Big\},
$$ 
which can be used to encode $ml$ many states emitted by 
$\mathcal{S}$ into $m$ blocks each of size $l$ such that 
\begin{equation} \label{E:optimality}
\min\{ \mathcal{F_S}((\ell_i)_{i=1}^{d^l}) : (\ell_i)_{i=1}^{d^l} \in \mathfrak{L} \} = L(\mathcal{V}).
\end{equation}
The minimizer $\mathcal{V}$ is given as follows: 
Assume that $\mathcal{F_S}$ achieves its minimum on $\mathfrak{L}$ 
at the point $(\ell_i)_{i=1}^{d^l} \in \mathfrak{L}$. 
Since the sequence $(\ell_i)_{i=1}^{d^l}$ satisfies the 
classical Kraft-McMillan inequality, (which is the last condition in the definition of $\mathfrak{L}$), by the converse of classical Kraft-McMillan inequality
there exists a sequence 
$(\omega_i)_{i=1}^{d^l}$ of classical bit strings with 
corresponding lengths $(\ell_i)_{i=1}^{d^l}$ such that the sequence $(\omega_i)_{i=1}^{d^l}$ forms the image of a classical uniquely decodable code. Then the 
corresponding sequence of qubit strings $(\ket{\omega_i})_{i=1}^{d^l}$ 
forms an optimal jointly orthonormal sequence of length codewords. 
For each $k \in \{ 1, \ldots , m \}$, and string 
$n_1, \ldots ,n_{(k-1)l} \in \{ 1, \ldots , N \}$, 
define
$$
V^{n_1, \ldots , n_{(k-1)l}} : \mathcal{H}^{\otimes l} \to (\mathbb{C}^2)^\oplus,
$$
by
$$
V^{n_1, \ldots , n_{(k-1)l}} = \sum_{i=1}^{d^l} \ketbra{\omega_i}{\lambda_i^{n_1, \ldots, n_{(k-1)l}}}.
$$
\end{Thm}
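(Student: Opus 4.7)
The plan is to reduce $L(\mathcal{U})$ to an expression that is linear in the lengths $(\ell_i)_{i=1}^{d^l}$ of the common jointly orthonormal length codewords of $\mathcal{U}$, and then minimize it in two stages: first over the input bases of each isometry with the lengths held fixed, and afterwards over the lengths themselves in $\mathfrak{L}$. The explicit minimizer $\mathcal{V}$ will then be read off from the equality cases of both stages.

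The first and central step is to compute $L(\mathcal{U})$ in closed form. Fix $\mathcal{U}$ with common length codewords $(\ket{\psi_i})_{i=1}^{d^l}$ of lengths $(\ell_i)_{i=1}^{d^l}$, arranged so that $\ell_1 \leq \cdots \leq \ell_{d^l}$. For each tuple $(n_1,\ldots,n_{ml})$ expand
\[
U^{n_1,\ldots,n_{(k-1)l}}\ket{s_{n_{(k-1)l+1}}\cdots s_{n_{kl}}} = \sum_{i=1}^{d^l} c_i^{(k)}\ket{\psi_i}, \qquad c_i^{(k)} = \la e_i^{n_1,\ldots,n_{(k-1)l}} | s_{n_{(k-1)l+1}}\cdots s_{n_{kl}}\ra .
\]
Joint orthonormality implies that the concatenation of the $m$ block encodings equals $\sum_{i_1,\ldots,i_m} c_{i_1}^{(1)}\cdots c_{i_m}^{(m)} \ket{\psi_{i_1}\circ\cdots\circ\psi_{i_m}}$ expressed in an orthonormal basis of length states, each with length $\ell_{i_1}+\cdots+\ell_{i_m}$; hence its $\Lambda$-expectation factors into $\sum_{k=1}^m \sum_i \ell_i |c_i^{(k)}|^2$ (using $\sum_i|c_i^{(j)}|^2 = 1$ for every $j$). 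Averaging against $p(n_1,\ldots,n_{ml})$ and collapsing the inner sums via the conditional ensembles of Definition~\ref{block_ensemble_state} yields
\[
L(\mathcal{U}) = \sum_{k=1}^m \sum_{n_1,\ldots,n_{(k-1)l}=1}^N p(n_1,\ldots,n_{(k-1)l}) \sum_{i=1}^{d^l} \ell_i \, \la e_i^{n_1,\ldots,n_{(k-1)l}} | \rho^{n_1,\ldots,n_{(k-1)l}} | e_i^{n_1,\ldots,n_{(k-1)l}} \ra .
\]

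Next I would carry out the two optimizations. For each fixed $k$ and history, the von Neumann trace (rearrangement) inequality gives $\sum_i \ell_i \la e_i|\rho|e_i\ra \geq \sum_i \lambda_i \ell_i$ whenever $\rho$ has decreasing eigenvalues $(\lambda_i)$ and $(\ell_i)$ is increasing, with equality when $\ket{e_i}$ is the $\lambda_i$-eigenvector; applying this block-by-block produces $L(\mathcal{U}) \geq \mathcal{F_S}((\ell_i)_{i=1}^{d^l})$. The allowable length tuples are exactly $\mathfrak{L}$: the forward direction of the quantum Kraft--McMillan inequality (Theorem~\ref{generalized_Kraft}) forces $\sum_i 2^{-\ell_i} \leq 1$, and the classical converse of Kraft--McMillan promotes any $(\ell_i)\in\mathfrak{L}$ to a prefix-free classical code whose standard-basis embedding $(\ket{\omega_i})$ is a jointly orthonormal sequence of length codewords (cf.\ the remark after Definition~\ref{jointly_orthonormal}). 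Taking the infimum over $\mathcal{U}$ yields $ILS(\mathcal{S},m,l) \geq \min_{(\ell_i)\in\mathfrak{L}} \mathcal{F_S}((\ell_i)_{i=1}^{d^l})$. To realize the minimizer, let $(\ell_i^*)$ attain the minimum, let $(\ket{\omega_i})$ be the corresponding length codewords, and set $V^{n_1,\ldots,n_{(k-1)l}} = \sum_i \ketbra{\omega_i}{\lambda_i^{n_1,\ldots,n_{(k-1)l}}}$; substituting into the closed form from the first step saturates the rearrangement inequality block-by-block and gives $L(\mathcal{V}) = \mathcal{F_S}((\ell_i^*)_{i=1}^{d^l})$, proving both displayed identities of the theorem.

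The main obstacle is the reduction step for $L(\mathcal{U})$. It relies crucially on both (i) the shared $\ket{\psi_i}$ being length states, so that $\Lambda$ acts on each as the scalar $\ell_i$, and (ii) joint orthonormality, so that the concatenations $\ket{\psi_{i_1}\circ\cdots\circ\psi_{i_m}}$ form an orthonormal basis of length states with additive lengths. Without both features the $\Lambda$-expectation of the full concatenated codeword fails to decouple across blocks, the expression is no longer linear in $(\ell_i)$, and the two-stage minimization collapses. Everything downstream---the rearrangement inequality, both directions of Kraft--McMillan, and the construction of $\mathcal{V}$---is a direct application of standard tools once the decomposition is in hand.
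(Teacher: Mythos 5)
Your proposal is correct and follows essentially the same route as the paper: joint orthonormality of the common length codewords makes the $\Lambda$-expectation additive across blocks, the eigenbasis choice per block and per history is forced by a Birkhoff--von Neumann plus rearrangement argument (your trace rearrangement inequality is exactly the paper's Lemma~\ref{lemma} applied to the bistochastic matrix $\bigl(\left|\braket{\lambda_i^{n_1,\ldots,n_{(k-1)l}}}{e_j^{n_1,\ldots,n_{(k-1)l}}}\right|^2\bigr)_{i,j}$), and the two directions of Kraft--McMillan (Theorem~\ref{generalized_Kraft}) give both the constraint $(\ell_i)_{i=1}^{d^l}\in\mathfrak{L}$ and the construction of the minimizer $\mathcal{V}$. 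The only differences are presentational: you obtain the per-block closed form of $L(\mathcal{U})$ in one direct expansion rather than by the paper's backward induction over blocks with concatenated rank-1 operators, and you leave implicit the short finiteness/boundedness argument (given in the paper) that the minimum of $\mathcal{F_S}$ over the infinite set $\mathfrak{L}$ is actually attained.
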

\begin{proof}[Sketch of the proof] 
Here's the outline of the proof of the above theorem.
Fix $m$ and $l$ as the number of blocks and the block size respectively.  Let 
$$
\mathfrak{L}= \left\{ (\ell_1, \ldots , \ell_{d^l}) : \ell_i \in \N \cup \{ 0 \} \text{ for all }i,\, \ell_1 \leq \ell_2 \leq \cdots \leq \ell_{d^l}, \text{ and } 
\sum_{i=1}^{d^l} 2^{-\ell_i} \leq 1 \right\}.
$$
 Fix a special block code $\mathcal{U}$ (Definition~\ref{special_block_sequence}) such that all isometries in $\mathcal{U}$ have $(\ket{\psi_i})_{i=1}^{d^l}$ as their common jointly orthonormal sequence of length codewords. Suppose that length$(\ket{\psi_i}) = \ell_i \in \N $ and without loss of generality, assume $\ell_r \leq \ell_s$ for $r \leq s$. By the forward direction of quantum Kraft-McMillan inequality (Theorem~\ref{generalized_Kraft}), we have that the set of non-negative integer lengths $\{\ell_i\}_{i=1}^{d^l}$ belongs to $\mathfrak{L}$.

For a specific sequence of pure states $\ket{s_{n_1}, \cdots s_{n_{(k-1)l}} }$ as the first $(k-1)l$ emissions, consider the corresponding $k^{th}$ block conditional ensemble state $\rho^{n_1, \ldots, n_{(k-1)l}}$. Also, consider the spectral decomposition of $\rho^{n_1, \ldots, n_{(k-1)l}}$ as
$\rho^{n_1, \ldots, n_{(k-1)l}} = \sum_{i=1}^{d^l} \lambda_i^{n_1, \ldots, n_{(k-1)l}} \ketbra{\lambda_i^{n_1, \ldots, n_{(k-1)l}}}$ and without loss of generality, assume that $\lambda_r^{n_1, \ldots, n_{(k-1)}} \geq \lambda_s^{n_1, \ldots, n_{(k-1)l}}$ for $r \leq s$. The codeword length of encoding the state $\rho^{n_1, \ldots, n_{(k-1)l}}$ by the quantum code $U^{n_1, \ldots, n_{(k-1)l}} = \sum_{i=1}^{d^l} \ketbra{\psi_i}{e_i^{n_1, \ldots, n_{(k-1)l}}} \in \mathcal{U} $, which has the length codewords $(\ket{\psi_i})_{i=1}^{d^l}$ and an arbitrary orthonormal basis $(\ket{e_i^{n_1, \ldots, n_{(k-1)l}}})_{i=1}^{d^l}$,  is given by 
\begin{equation} \label{conditional_ensemble_length}
\Tr(U^{n_1, \ldots, n_{(k-1)l}} \rho^{n_1, \ldots, n_{(k-1)l}} (U^{n_1, \ldots, n_{(k-1)l}})^\dagger \Lambda).
\end{equation} 
Then, we show by using the Birkhoff-von Neumann theorem and Lemma \ref{lemma} that the codeword length given by the expression~\eqref{conditional_ensemble_length} is minimized for the given set of lengths $\{\ell_i\}_{i=1}^{d^l}$ when $U^{n_1, \ldots, n_{(k-1)l}} = \sum_{i=1}^{d^l} \ketbra{\psi_i}{\lambda_i^{n_1, \ldots, n_{(k-1)l}}}$. In this case, expression~\eqref{conditional_ensemble_length} can be re-written as
\begin{equation}
\Tr(U^{n_1, \ldots, n_{(k-1)l}} \rho^{n_1, \ldots, n_{(k-1)l}} (U^{n_1, \ldots, n_{(k-1)l}})^\dagger \Lambda) = \sum_{i=1}^{d^l} \lambda_i^{n_1, \ldots, n_{(k-1)l}} \ell_i.
\end{equation}
Since the state $\rho^{n_1, \ldots, n_{(k-1)l}}$ occurs with the probability $p( n_1, \ldots, n_{(k-1)l})$, the minimum average codeword length for encoding the $k^{th}$ block ensemble state $\rho^k = \sum_{n_1, \ldots, n_{(k-1)l} =1}^N p(n_1, \ldots, n_{(k-1)l}) \rho^{n_1, \ldots, n_{(k-1)l}}$ (Definition \ref{block_ensemble_state}) is given by $\sum_{n_1, \ldots, n_{(k-1)l}} p(n_1, \ldots, n_{(k-1)l}) \sum_{i=1}^{d^l} \lambda_i^{n_1, \ldots, n_{(k-1)l}} \ell_i$ for the fixed set of lengths $\{\ell_i\}_{i=1}^{d^l}$. Recall that for $k=1$ we use the convention that $\lambda_i^{n_1, \ldots, n_{(k-1)l}} = \lambda_i$ and $p(n_1, \ldots, n_{(k-1)l}) = 1$. Summing the minimum average codeword lengths for each block gives the expression in Equation~(\ref{L}), which is the minimum average codeword length for $m$ blocks. It is worth noticing that the minimum average codeword length for $m$ blocks is additive of the minimum average codeword lengths for each block. Refer to the paragraph following Equation~\eqref{E:begining_of_m-1_block_part_2} in the Appendix \ref{main_theorem} for the details. Finally, minimization of Equation~(\ref{L}) over all sets of  lengths  $(\ell_i)_{i=1}^{d^l} \in \mathfrak{L}$
 gives Equation~(\ref{E:Best_ACL}) which is the minimum average codeword length over the set of all special block codes for encoding $m$ blocks. Once the minimizer set of lengths is found, one can then construct an optimal jointly orthonormal sequence of length codewords $(\ket{\omega_i})_{i=1}^{d^l}$ and define a minimizer special block code $\mathcal{V}$ as stated in the theorem.
See Appendix \ref{main_theorem} for the full proof of the theorem. 
\end{proof}

Our Theorem~\ref{Thm:main} recovers the result of 
\cite[Theorem 2]{Bellomo2017Nov} which was the case for $m=1$, (i.e.\ when the whole message is encoded as a single block).

\section{A simplified compression for stochastic quantum sources and its application to stationary quantum sources} \label{Stationary}
Notice from the statement of Theorem~\ref{Thm:main} that achieving the above theoretical lower bound requires that for every $k^{th}$ block, Alice (the encoder) smartly chooses the optimal isometry $V^{n_1, \ldots, n_{(k-1)l}} = \sum_{i=1}^{d^l} \ketbra{\omega_i}{\lambda_i^{n_1, \ldots, n_{(k-1)l}}}$ to encode each $k^{th}$ block conditional ensemble state $\rho^{n_1, \ldots,n_{(k-1)l}}$, which requires an accurate estimation of the exact sequence of pure states $\ket{s_{n_1, \ldots, n_{(k-1)l}}}$ emitted in the first $(k-1)$ blocks. Since the pure states emitted by the source are not necessarily orthogonal, this task of state discrimination of non-orthogonal states can be difficult. So, we relax this requirement and envision a simplified compression technique where for a given $k^{th}$ block, Alice uses a fixed isometry $U_k$ to encode every $k^{th}$ block conditional ensemble state $\rho^{n_1, \ldots, n_{(k-1)l}}$ regardless of the emissions in the first $(k-1)$ blocks. In other words, we would like to restrict our attention to the set of  constrained special block codes (Definition~\ref{special_block_sequence}) where $U^{n_1, \ldots, n_{(k-1)l}} = U_k= \sum_{i=1}^{d^l} \ketbra{\psi_i}{e_i^k}$ for all $n_1, \ldots, n_{(k-1)l} \in \{1, \ldots, N\}$. We then eventually show that when such a compression is applied to the pure states emitted from a stationary quantum source and one appropriately chooses the isometries $U_k$'s, the optimal lower bound of the average codeword length per symbol equals the von Neumann entropy rate of the source for an asymptotically long block size. From here on, in order to distinguish between the two compressions, we will refer to the compression outlined in Theorem~\ref{Thm:main} as the \textbf{adaptive} compression while this new simplified compression that we discuss below as the \textbf{non-adaptive} compression. It is worth emphasizing that the only difference between the two compressions is that the adaptive compression minimizes the average codeword length over the set of all special block codes, while the non-adaptive compression minimizes the same over the set of all constrained special block codes. 

\begin{Thm} \label{Thm:derived}
Consider a  quantum stochastic source $\mathcal{S}$ 
consisting of an alphabet of $N$ many pure states
spanning a $d$-dimensional Hilbert space $\mathcal{H}$, 
and a stochastic process $X$ having mass function $p$, 
as in  
Definition~\ref{quantum_source}. Fix $m,l \in \N$. 
Then $ILC(\mathcal{S},m,l)$ can be computed as follows: 
For each $k =1, \ldots , m$, let 
$(\lambda_i^k)_{i=1}^{d^l}$ be the 
eigenvalues of the $k^{\text{th}}$ block 
ensemble state $\rho^k$,
arranged in decreasing order, and 
$(\ket{\lambda_i^k})_{i=1}^{d^l}$ be the corresponding eigenvectors. 

Let 
$$
\mathfrak{L}= \left\{ (\ell_1, \ldots , \ell_{d^l}) : \ell_i \in \N \cup \{ 0 \} \text{ for all }i,\, \ell_1 \leq \ell_2 \leq \cdots \leq \ell_{d^l}, \text{ and } 
\sum_{i=1}^{d^l} 2^{-\ell_i} \leq 1 \right\}.
$$	

Define a function $\mathcal{F_C}: \mathfrak{L} \to [0,\infty)$ by 
\begin{equation} \label{LC}
\mathcal{F_C}((\ell_i)_{i=1}^{d^l}) :=
\sum_{k=1}^m \sum_{i=1}^{d^l} \lambda_i^k \ell_i.
\end{equation}

Then,  
\begin{equation} \label{E:Best_ACL_C}
ILC(\mathcal{S},m,l)= \min\{ \mathcal{F_C}((\ell_i)_{i=1}^{d^l}) : (\ell_i)_{i=1}^{d^l} \in \mathfrak{L} \}.
\end{equation}

Moreover, the infimum defining $ILC(\mathcal{S},m,l)$ is actually a minimum,
i.e., there exists a constrained special block code
$$
\mathcal{V}= \Big\{ V^k: 
k \in \{ 1, \ldots , m \} \Big\},
$$ 
which can be used to encode $ml$ many states emitted by 
$\mathcal{S}$ into $m$ blocks each of size $l$ such that 
\begin{equation} \label{E:optimality_C}
\min\{ \mathcal{F_C}((\ell_i)_{i=1}^{d^l}) : (\ell_i)_{i=1}^{d^l} \in \mathfrak{L} \} = L(\mathcal{V}).
\end{equation}
The minimizer $\mathcal{V}$ is given as follows: 
Assume that $\mathcal{F_C}$ achieves its minimum on $\mathfrak{L}$ 
at the point $(\ell_i)_{i=1}^{d^l} \in \mathfrak{L}$. 
Since the sequence $(\ell_i)_{i=1}^{d^l}$ satisfies the 
classical Kraft-McMillan inequality, (which is the last condition in the definition of $\mathfrak{L}$), by the converse of classical Kraft-McMillan inequality
there exists a sequence 
$(\omega_i)_{i=1}^{d^l}$ of classical bit strings with 
corresponding lengths $(\ell_i)_{i=1}^{d^l}$ such that the sequence $(\omega_i)_{i=1}^{d^l}$ forms the image of a classical unqiuely decodable code. Then the 
corresponding sequence of qubit strings $(\ket{\omega_i})_{i=1}^{d^l}$ 
forms an optimal jointly orthonormal sequence of length codewords. 
For each $k \in \{ 1, \ldots , m \}$, 
define
$$
V^k : \mathcal{H}^{\otimes l} \to (\mathbb{C}^2)^\oplus,
$$
by
$$
V^k = \sum_{i=1}^{d^l} \ketbra{\omega_i}{\lambda_i^k}.
$$
\end{Thm}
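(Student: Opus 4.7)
The plan is to follow the blueprint of the proof of Theorem~\ref{Thm:main} while exploiting the simplification afforded by the constrained setting. Since each block is encoded by a single fixed isometry $U_k$ that is independent of the previously emitted symbols, the expected codeword length of the $k$-th block can be optimized directly against the marginal $k$-th block ensemble state $\rho^k$, rather than against each conditional state $\rho^{n_1,\ldots,n_{(k-1)l}}$ separately. This decouples the optimization across blocks and reduces the task to a sum of $m$ single-block minimizations of the type already solved in Theorem~\ref{Thm:main}.

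First I would fix an arbitrary constrained special block code $\mathcal{U}=\{U_k\}_{k=1}^m$ with common jointly orthonormal sequence of length codewords $(\ket{\psi_i})_{i=1}^{d^l}$ of lengths $\ell_1\leq\cdots\leq\ell_{d^l}$; by the forward direction of the quantum Kraft-McMillan inequality (Theorem~\ref{generalized_Kraft}), $(\ell_i)_{i=1}^{d^l}\in\mathfrak{L}$. Because each $\ket{\psi_i}$ is a length state and the concatenations $\ket{\psi_{i_1}\circ\cdots\circ\psi_{i_m}}$ form an orthonormal family, the length observable $\Lambda$ evaluates additively on such concatenations. Combined with the observation that marginalizing $p(n_1,\ldots,n_{ml})$ over all coordinates outside the $k$-th block reproduces exactly the distribution defining $\rho^k$ (Definition~\ref{block_ensemble_state}), the average codeword length collapses to
$$
L(\mathcal{U})=\sum_{k=1}^m \Tr\bigl(U_k\,\rho^k\, U_k^{\dagger}\,\Lambda\bigr).
$$

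Next I would minimize each summand independently. For fixed $k$ and fixed lengths $(\ell_i)_{i=1}^{d^l}$, write $U_k=\sum_{i=1}^{d^l}\ketbra{\psi_i}{e_i^k}$ for a varying orthonormal basis $(\ket{e_i^k})_{i=1}^{d^l}$. Applying the Birkhoff-von Neumann theorem together with Lemma~\ref{lemma}, exactly as in the proof of Theorem~\ref{Thm:main}, one obtains $\Tr(U_k\rho^k U_k^{\dagger}\Lambda)\geq \sum_{i=1}^{d^l}\lambda_i^k \ell_i$, with equality when $\ket{e_i^k}=\ket{\lambda_i^k}$. Summing over $k$, the minimum of $L(\mathcal{U})$ over constrained special block codes with common length tuple $(\ell_i)_{i=1}^{d^l}$ equals $\mathcal{F_C}((\ell_i)_{i=1}^{d^l})$. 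Minimizing further over $\mathfrak{L}$, which is attained since the problem reduces to a classical Huffman-type minimization on $d^l$ symbols (so one may restrict to tuples with $\ell_i\leq d^l-1$ without increasing the sum), yields $ILC(\mathcal{S},m,l)$ and Equation~\eqref{E:Best_ACL_C}. Given a minimizer $(\ell_i)_{i=1}^{d^l}$, the converse of the classical Kraft-McMillan inequality produces bit strings $(\omega_i)_{i=1}^{d^l}$ of these lengths forming a uniquely decodable classical code; lifting them to length qubit strings $(\ket{\omega_i})_{i=1}^{d^l}$ and setting $V^k=\sum_{i=1}^{d^l}\ketbra{\omega_i}{\lambda_i^k}$ gives a minimizer $\mathcal{V}$ with $L(\mathcal{V})=\min\mathcal{F_C}$, establishing \eqref{E:optimality_C}.

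The main obstacle is justifying the additive decomposition $L(\mathcal{U})=\sum_k \Tr(U_k\rho^k U_k^{\dagger}\Lambda)$ in the first step: one must carefully interchange the conditional-ensemble averaging encoded by $p(n_1,\ldots,n_{ml})$ with the fact that the fixed isometry $U_k$ ignores the history, and use the joint orthonormality of $(\ket{\psi_i})_{i=1}^{d^l}$ to separate the $k$-th block's contribution from the others when evaluating $\Lambda$ on the concatenated output. Once this decomposition is in place, the remaining steps are essentially streamlined reuses of arguments already deployed in Theorem~\ref{Thm:main}, since the family of candidate isometries per block in the constrained setting is much simpler than in the adaptive case.
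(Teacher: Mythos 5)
Your proposal is correct and follows essentially the same route as the paper: the paper likewise reduces $L(\mathcal{U})$ to $\sum_{k=1}^m \Tr\bigl(U_k\,\rho^k\,U_k^{\dagger}\,\Lambda\bigr)$ (using the block-additivity established in the proof of Theorem~\ref{Thm:main} together with the history-independence of $U_k$ and linearity of the trace), then applies the Birkhoff--von Neumann theorem with Lemma~\ref{lemma} blockwise to get $\sum_i \lambda_i^k\ell_i$, minimizes over $\mathfrak{L}$, and builds the minimizer $V^k=\sum_i\ketbra{\omega_i}{\lambda_i^k}$ via the converse of the classical Kraft--McMillan inequality. The only cosmetic difference is your bound $\ell_i\leq d^l-1$ for the attainment of the minimum versus the paper's $\ell_i\leq\lceil\log_2 d^l\rceil$, which does not affect the argument.
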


The following is a corollary of Theorem~\ref{Thm:derived}.
\begin{Cor}\label{corollary_stationary}
Consider a stationary quantum stochastic source $\mathcal{S}$ (Definition~\ref{stationary_source}). We are interested in deriving the minimum average codeword length for $m$ blocks each of block size $l$ using constrained special block codes, i.e. $ILC(\mathcal{S},m,l)$ as in Theorem~\ref{Thm:derived}. Let the spectral decomposition of the ensemble state $\rho_l$ (or equivalently, $\rho^1$) (Definition~\ref{general_ensemble}) for $l$ many emissions from the source to be:
$$
\rho_{l} = \sum_{i=1}^{d^l} \lambda_i \ketbra{\lambda_i}.
$$
Consider the isometry $V = \sum_{i=1}^{d^l} \ketbra{\omega_i}{\lambda_i}$, where $(\omega_i)_{i=1}^{d^l}$ is the sequence of bit strings obtained from the Huffman code based on the probability distribution $(\lambda_i)_{i=1}^{d^l}$ such that  length$(\omega_i) = \ell_i$. 
Then the minimizer constrained special block code $\mathcal{V}$ for encoding $m$ blocks comprises of the identical quantum code given by $\mathcal{V} = V$ and the minimum average codeword length per symbol equals the von-Neumann entropy rate of the source for an asymptotically long block size, i.e. $\lim_{l \to \infty} \frac{ILC(\mathcal{S},1,l)}{l} = \lim_{l \to \infty} \frac{\mathcal{S}(\rho_{l})}{l}$.
\end{Cor}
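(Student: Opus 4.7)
The plan is to invoke Theorem~\ref{Thm:derived} and exploit stationarity to collapse the per-block minimization to a single classical Huffman problem attached to the spectrum of $\rho_l$.

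First, I would show that stationarity of $\mathcal{S}$ forces $\rho^k = \rho_l$ for every $k \in \{1,\ldots,m\}$. Starting from \eqref{block_ensemble} and substituting \eqref{kth_block_ensemble}, the identity $p(n_1,\ldots,n_{kl}) = p(n_1,\ldots,n_{(k-1)l})\,p(n_{(k-1)l+1},\ldots,n_{kl}\mid n_1,\ldots,n_{(k-1)l})$ lets one sum out the first $(k-1)l$ indices and rewrite $\rho^k$ as the ensemble associated to the marginal distribution of $(X_{(k-1)l+1},\ldots,X_{kl})$. The translation invariance \eqref{translation-invariance} identifies this marginal with the distribution of $(X_1,\ldots,X_l)$, giving $\rho^k = \rho_l$. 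In particular the eigenvalue sequences coincide: $(\lambda_i^k)_{i=1}^{d^l} = (\lambda_i)_{i=1}^{d^l}$ for every $k$, and the shared eigenbasis may be taken to be $(\ket{\lambda_i})_{i=1}^{d^l}$.

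Substituting into \eqref{LC} yields $\mathcal{F_C}((\ell_i)_{i=1}^{d^l}) = m \sum_{i=1}^{d^l} \lambda_i \ell_i$, so minimizing $\mathcal{F_C}$ over $\mathfrak{L}$ is, up to the constant factor $m$, the classical problem of minimizing an expected codeword length over all integer length-sequences obeying Kraft's inequality. By the classical Huffman theorem this minimum is realized by the Huffman code for the probability distribution $(\lambda_i)_{i=1}^{d^l}$, with lengths $\ell_i = \text{length}(\omega_i)$. Plugging these $\omega_i$ into the recipe of Theorem~\ref{Thm:derived} gives precisely $V^k = V = \sum_{i=1}^{d^l} \ketbra{\omega_i}{\lambda_i}$ for every $k$, establishing the assertion about the minimizer $\mathcal{V}$.

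For the asymptotic rate, I specialize to $m=1$ and apply the classical Shannon source coding bounds satisfied by the Huffman code, namely $H((\lambda_i)_{i=1}^{d^l}) \le \sum_i \lambda_i \ell_i \le H((\lambda_i)_{i=1}^{d^l}) + 1$. Since $(\lambda_i)$ is the spectrum of $\rho_l$ we have $H((\lambda_i)) = S(\rho_l)$, so
$$
\frac{S(\rho_l)}{l} \;\le\; \frac{ILC(\mathcal{S},1,l)}{l} \;\le\; \frac{S(\rho_l)}{l} + \frac{1}{l}.
$$
Letting $l \to \infty$, the additive $1/l$ term vanishes and the sandwich theorem yields $\lim_{l\to\infty} ILC(\mathcal{S},1,l)/l = \lim_{l\to\infty} S(\rho_l)/l$; the latter limit is well-defined because $l \mapsto S(\rho_l)$ is subadditive for stationary quantum sources, so Fekete's lemma guarantees existence. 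The main obstacle is the first step: the careful bookkeeping required to convert the conditional-probability definition of $\rho^k$ into the joint distribution of one $l$-block and then apply \eqref{translation-invariance} correctly. Once that identification $\rho^k = \rho_l$ is in hand, the remainder is a clean reduction to classical Huffman coding combined with an $O(1/l)$-style limit.
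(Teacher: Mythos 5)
Your proposal is correct and follows essentially the same route as the paper: establish $\rho^k=\rho_l$ from translation invariance, reduce the minimization of $\mathcal{F_C}$ to the classical Huffman problem for the spectrum $(\lambda_i)$, sandwich $\sum_i\lambda_i\ell_i$ between $S(\rho_l)$ and $S(\rho_l)+1$, and divide by $l$ and pass to the limit. The only (harmless) differences are that you justify the per-block optimality and the entropy bounds via classical Huffman/Kraft arguments and the existence of the limit via subadditivity of $S(\rho_l)$ plus Fekete's lemma, where the paper instead cites Theorems 2 and 3 of Bellomo et al.\ and a textbook result for the entropy rate.
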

Refer to Appendix~\ref{additional_theorems} for the proofs of Theorem~\ref{Thm:derived} and Corollary~\ref{corollary_stationary}. \\
\section{Conclusion and open questions}
Bellomo, Bosyk, Holik and Zozor \cite{Bellomo2017Nov} defined the optimal quantum code for
encoding the entire sequence of 
pure states emitted by a given quantum stochastic 
source 
into a single block 
and showed that the associated minimum average codeword length could be bounded from above and below in terms of the von-Neumann
entropy of the source. The present work focuses on the encoding of the sequence of $q$ many pure states from the quantum stochastic 
source $\mathcal{S}$ into $m \in \N$ blocks, each of block size $l \in \N$ such that $q=ml$ using special block codes, and we gave an optimal lower bound (denoted by $ILS(\mathcal{S},m,l)$ in Theorem~\ref{Thm:main}) of the average codeword length computed over all special block codes. Lastly, we gave a simplified non-adaptive compression technique based on constrained special block codes for quantum stochastic sources that emit pure states where the encoder can fix one isometry per block of states, and showed that if one uses such a compression technique for a stationary quantum source, then the optimal lower bound of the average codeword length per symbol over all constrained special block codes (denoted by $ILC(\mathcal{S},m,l)$ in Corollary~\ref{corollary_stationary}) equals the von-Neumann entropy rate of the source for an asymptotically long block size. Since we encode the pure states into indeterminate-length codewords which may not have a concrete physical interpretation, whether the derived optimal lower bounds are practically achievable require further investigation. Nevertheless, these theoretical lower bounds can serve as a benchmark for measuring the efficiency of lossless quantum block compression algorithms.  
The following interesting questions are still open.

For a fixed length message consisting of $q \in \N$ many pure states emitted by a given quantum stochastic source S, what are the optimal values of $m$ and $l$ such that $ml = q$ and $ILS(\mathcal{S},m,l)$ over special (or $ILC(\mathcal{S},m,l)$ over constrained special) block codes is minimized? In other words, what is the optimal number of blocks and the optimal block size for encoding a fixed length message of pure states emitted by a quantum stochastic source using special (or constrained special) block codes?

It would be interesting to relate
$ILS(\mathcal{S},m,l)$  
over special block codes with an entropic quantity of a state 
related to the quantum source. 
Perhaps one needs an alternate entropic quantity (other than von~Neumann entropy) in order to tightly bound $ILS(\mathcal{S},m,l)$, (see \cite{Ahlswede2004Jun, Anshu2016Sep}).

Notice that special block codes are not the most general uniquely decodable quantum codes.
So, it would be interesting to 
derive a similar optimal lower bound of the average codeword length over all uniquely decodable quantum codes used for block encoding the sequence of pure states emitted by a quantum stochastic source.

One could also pursue answering these questions for a quantum stochastic source which emits mixed states instead of pure states, which is yet another interesting direction of research.

\bibliographystyle{abbrvurl}
\bibliography{Bibliography}

\begin{thebibliography}{10}

\bibitem{Ahlswede2004Jun}
R.~Ahlswede and N.~Cai.
\newblock {On lossless quantum data compression with a classical helper}.
\newblock {\em IEEE Trans. Inf. Theory}, 50(6):1208--1219, June 2004.
\newblock \href {https://doi.org/10.1109/TIT.2004.828071}
  {\path{doi:10.1109/TIT.2004.828071}}.

\bibitem{alicki2001quantum}
R.~Alicki and M.~Fannes.
\newblock {\em Quantum dynamical systems}.
\newblock Oxford university press, 2001.

\bibitem{Androulakis2019Sep}
G.~Androulakis and D.~Wright.
\newblock {Optimality in quantum data compression using dynamical entropy}.
\newblock {\em Phys. Rev. A}, 100(3):032301, Sept. 2019.
\newblock \href {https://doi.org/10.1103/PhysRevA.100.032301}
  {\path{doi:10.1103/PhysRevA.100.032301}}.

\bibitem{Anshu2016Sep}
A.~Anshu, A.~Garg, A.~W. Harrow, and P.~Yao.
\newblock {\em {Lower Bound on Expected Communication Cost of Quantum Huffman
  Coding}}.
\newblock Schloss Dagstuhl--Leibniz-Zentrum fuer Informatik, Sept. 2016.
\newblock URL: \url{https://dspace.mit.edu/handle/1721.1/109878}.

\bibitem{Bellomo2017Nov}
G.~Bellomo, G.~M. Bosyk, F.~Holik, and S.~Zozor.
\newblock {Lossless quantum data compression with exponential penalization: an
  operational interpretation of the quantum
  R{\ifmmode\acute{e}\else\'{e}\fi}nyi entropy}.
\newblock {\em Sci. Rep.}, 7(14765):1--10, Nov. 2017.
\newblock \href {https://doi.org/10.1038/s41598-017-13350-y}
  {\path{doi:10.1038/s41598-017-13350-y}}.

\bibitem{Bostroem2002Feb}
K.~Bostroem and T.~Felbinger.
\newblock {Lossless quantum data compression and variable-length coding}.
\newblock {\em Phys. Rev. A}, 65(3):032313, Feb. 2002.
\newblock \href {https://doi.org/10.1103/PhysRevA.65.032313}
  {\path{doi:10.1103/PhysRevA.65.032313}}.

\bibitem{Braunstein2000Jul}
S.~L. Braunstein, C.~A. Fuchs, D.~Gottesman, and H.-K. Lo.
\newblock {A quantum analog of Huffman coding}.
\newblock {\em IEEE Trans. Inf. Theory}, 46(4):1644--1649, July 2000.
\newblock \href {https://doi.org/10.1109/18.850709}
  {\path{doi:10.1109/18.850709}}.

\bibitem{Chou2006}
H.-H. Chou and J.~Cheng.
\newblock New lower bounds on the average base length of lossless quantum data
  compression.
\newblock 2006.
\newblock URL: \url{https://api.semanticscholar.org/CorpusID:118374660}.

\bibitem{Chuang2000May}
I.~L. Chuang and D.~S. Modha.
\newblock {Reversible arithmetic coding for quantum data compression}.
\newblock {\em IEEE Trans. Inf. Theory}, 46(3):1104--1116, May 2000.
\newblock \href {https://doi.org/10.1109/18.841192}
  {\path{doi:10.1109/18.841192}}.

\bibitem{Cover2005Apr}
T.~M. Cover and J.~A. Thomas.
\newblock {\em {Elements of Information Theory}}.
\newblock Apr. 2005.
\newblock \href {https://doi.org/10.1002/047174882X}
  {\path{doi:10.1002/047174882X}}.

\bibitem{fano}
R.~M. Fano.
\newblock {\em The transmission of information}, volume~65.
\newblock Massachusetts Institute of Technology, Research Laboratory of
  Electronics~…, 1949.

\bibitem{Hajdusek2023Nov}
M.~Hajdu{\ifmmode\check{s}\else\v{s}\fi}ek and R.~Van~Meter.
\newblock {Quantum Communications}.
\newblock {\em arXiv}, Nov. 2023.
\newblock \href {http://arxiv.org/abs/2311.02367} {\path{arXiv:2311.02367}},
  \href {https://doi.org/10.48550/arXiv.2311.02367}
  {\path{doi:10.48550/arXiv.2311.02367}}.

\bibitem{Hayashi}
M.~Hayashi.
\newblock {\em {A Group Theoretic Approach to Quantum Information}}.
\newblock Springer International Publishing, Cham, Switzerland.
\newblock URL: \url{https://link.springer.com/book/10.1007/978-3-319-45241-8}.

\bibitem{Hayashi2010Jan}
M.~Hayashi.
\newblock {Universal Approximation of Multi-copy States and Universal Quantum
  Lossless Data Compression}.
\newblock {\em Commun. Math. Phys.}, 293(1):171--183, Jan. 2010.
\newblock \href {https://doi.org/10.1007/s00220-009-0909-y}
  {\path{doi:10.1007/s00220-009-0909-y}}.

\bibitem{Hayashi2002Aug}
M.~Hayashi and K.~Matsumoto.
\newblock {Quantum universal variable-length source coding}.
\newblock {\em Phys. Rev. A}, 66(2):022311, Aug. 2002.
\newblock \href {https://doi.org/10.1103/PhysRevA.66.022311}
  {\path{doi:10.1103/PhysRevA.66.022311}}.

\bibitem{HayashiMasahito2002Dec}
M.~Hayashi and K.~Matsumoto.
\newblock {Simple construction of quantum universal variable-length source
  coding}.
\newblock {\em Quantum Inf. Comput.}, Dec. 2002.
\newblock URL: \url{https://doi.org/10.26421/QIC2.s-2}.

\bibitem{Huffman1952Sep}
D.~A. Huffman.
\newblock {A Method for the Construction of Minimum-Redundancy Codes}.
\newblock {\em Proc. IRE}, 40(9):1098--1101, Sept. 1952.
\newblock \href {https://doi.org/10.1109/JRPROC.1952.273898}
  {\path{doi:10.1109/JRPROC.1952.273898}}.

\bibitem{Jozsa1998Aug}
R.~Jozsa, M.~Horodecki, P.~Horodecki, and R.~Horodecki.
\newblock {Universal Quantum Information Compression}.
\newblock {\em Phys. Rev. Lett.}, 81(8):1714--1717, Aug. 1998.
\newblock \href {https://doi.org/10.1103/PhysRevLett.81.1714}
  {\path{doi:10.1103/PhysRevLett.81.1714}}.

\bibitem{Kaltchenko2003Aug}
A.~Kaltchenko and E.-H. Yang.
\newblock {Universal compression of classically correlated stationary quantum
  sources}.
\newblock In {\em {Proceedings Volume 5105, Quantum Information and
  Computation}}, volume 5105, pages 104--115. SPIE, Aug. 2003.
\newblock \href {https://doi.org/10.1117/12.485663}
  {\path{doi:10.1117/12.485663}}.

\bibitem{Kaltchenko2003}
A.~Kaltchenko and E.-H. Yang.
\newblock {Universal compression of ergodic quantum sources}.
\newblock {\em Quantum Inf. Comput.}, 3(4):359--375, July 2003.
\newblock \href {https://doi.org/10.5555/2011528.2011533}
  {\path{doi:10.5555/2011528.2011533}}.

\bibitem{Khatri2020Nov}
S.~Khatri and M.~M. Wilde.
\newblock {Principles of Quantum Communication Theory: A Modern Approach}.
\newblock {\em arXiv}, Nov. 2020.
\newblock \href {http://arxiv.org/abs/2011.04672} {\path{arXiv:2011.04672}},
  \href {https://doi.org/10.48550/arXiv.2011.04672}
  {\path{doi:10.48550/arXiv.2011.04672}}.

\bibitem{arithmetic}
G.~G. Langdon.
\newblock An introduction to arithmetic coding.
\newblock {\em IBM Journal of Research and Development}, 28(2):135--149, 1984.
\newblock \href {https://doi.org/10.1147/rd.282.0135}
  {\path{doi:10.1147/rd.282.0135}}.

\bibitem{Langford2002May}
J.~Langford.
\newblock {Generic quantum block compression}.
\newblock {\em Phys. Rev. A}, 65(5):052312, May 2002.
\newblock \href {https://doi.org/10.1103/PhysRevA.65.052312}
  {\path{doi:10.1103/PhysRevA.65.052312}}.

\bibitem{Muller2008}
M.~M{\"u}ller and C.~Rogers.
\newblock Quantum bit strings and prefix-free {H}ilbert spaces.
\newblock In {\em Information Theory and Statistical Learning}, 2008.
\newblock URL: \url{https://api.semanticscholar.org/CorpusID:17405115}.

\bibitem{Muller2008Jan}
M.~M\"uller, C.~Rogers, and R.~Nagarajan.
\newblock Lossless quantum prefix compression for communication channels that
  are always open.
\newblock {\em Phys. Rev. A}, 79:012302, Jan 2009.
\newblock URL: \url{https://link.aps.org/doi/10.1103/PhysRevA.79.012302}, \href
  {https://doi.org/10.1103/PhysRevA.79.012302}
  {\path{doi:10.1103/PhysRevA.79.012302}}.

\bibitem{Schumacher1994}
B.~Schumacher.
\newblock Presentation at {S}ante {F}e {I}nstitute workshop on {C}omplexity,
  {E}ntropy and the {P}hysics of {I}nformation.
\newblock 1994.

\bibitem{Schumacher1995Apr}
B.~Schumacher.
\newblock {Quantum coding}.
\newblock {\em Phys. Rev. A}, 51(4):2738--2747, Apr. 1995.
\newblock \href {https://doi.org/10.1103/PhysRevA.51.2738}
  {\path{doi:10.1103/PhysRevA.51.2738}}.

\bibitem{Schumacher2001Sep}
B.~Schumacher and M.~D. Westmoreland.
\newblock {Indeterminate-length quantum coding}.
\newblock {\em Phys. Rev. A}, 64(4):042304, Sept. 2001.
\newblock \href {https://doi.org/10.1103/PhysRevA.64.042304}
  {\path{doi:10.1103/PhysRevA.64.042304}}.

\bibitem{shannon}
C.~E. Shannon.
\newblock A mathematical theory of communication.
\newblock {\em The Bell system technical journal}, 27(3):379--423, 1948.

\bibitem{Szeto1996Jul}
K.~Y. Szeto.
\newblock {Data Compression for Quantum Code}.
\newblock {\em arXiv}, July 1996.
\newblock \href {http://arxiv.org/abs/quant-ph/9607010}
  {\path{arXiv:quant-ph/9607010}}, \href
  {https://doi.org/10.48550/arXiv.quant-ph/9607010}
  {\path{doi:10.48550/arXiv.quant-ph/9607010}}.

\end{thebibliography}

\appendix
\section {A characterization of uniquely decodable quantum codes} \label{proof_uniquely_decodable}
\begin{proof}[Proof of Theorem~\ref{m-orthonormal}]
The \lq\lq only if\rq\rq\ direction:\\
Suppose the sequence of quantum codes  $(U_k)_{k=1}^m$ is uniquely decodable, and $U_k = \sum_{i=1}^D \ketbra{\psi_i^k}{e_i^k}$. Then by Definition~\ref{uniquely_decodable}, the linear map $U_1 \circ \cdots \circ U_m: \mathcal{K}^{\otimes m} \to (\mathbb{C}^2)^\oplus$ is an isometry. Therefore, it follows that
\begin{align} \nonumber
&\phantom{....}\braket{(U_1 \circ \cdots \circ U_m)(e_{v_1}^1 \otimes  \cdots  \otimes e_{v_m}^m)}{(U_1 \circ \cdots \circ U_m)(e_{{v_1}^\prime}^1 \otimes \cdots \otimes e_{{v_m}^\prime}^m)} \\ \nonumber
&= \braket{e_{v_1}^1 \otimes  \cdots  \otimes e_{v_m}^m}{e_{{v_1}^\prime}^1 \otimes \cdots \otimes e_{{v_m}^\prime}^m} \\ \label{orthonormal_eqn}
&= \delta_{v_1, {v_1}^\prime} \cdots \delta_{v_m, {v_m}^\prime}.
\end{align}

On the other hand, by Definition~\ref{concatenation}, $(U_1 \circ \cdots \circ U_m)\ket{e_{v_1}^1 \otimes  \cdots  \otimes e_{v_m}^m}=\psi_{v_1}^1 \circ  \cdots  \circ \psi_{v_m}^m$. 
Therefore, by Equation~\eqref{orthonormal_eqn} we obtain that
\begin{align*}
&\phantom{....}\braket{\psi_{v_1}^1 \circ \cdots  \circ \psi_{v_m}^m}{\psi_{{v_1}^\prime}^1 \circ \cdots \circ \psi_{{v_m}^\prime}^m} \\
&=\braket{(U_1 \circ \cdots \circ U_m)(e_{v_1}^1 \otimes  \cdots  \otimes e_{v_m}^m)}{(U_1 \circ \cdots \circ U_m)(e_{{v_1}^\prime}^1 \otimes \cdots \otimes e_{{v_m}^\prime}^m)} \\
&= \delta_{v_1, {v_1}^\prime} \cdots \delta_{v_m, {v_m}^\prime},
\end{align*}
i.e.\ the collection
$
\Big\{ \ket{\psi_{v_1}^1 \circ \cdots \circ \psi_{v_m}^m} : 
( v_1, \ldots , v_m ) \in \{ 1, \ldots , D\}^m
\Big\}
$
is an orthonormal set. \\
The \lq\lq if\rq\rq\ direction:\\
Suppose that the collection $
\Big\{ \ket{\psi_{v_1}^1 \circ \cdots \circ \psi_{v_m}^m} : 
( v_1, \ldots , v_m ) \in \{ 1, \ldots , D\}^m
\Big\}
$
is an orthonormal set.
Consider any two sequences of vectors 
$(\ket{\phi_k})_{k=1}^m$, and  $(\ket{\Phi_k})_{k=1}^m$ 
such that $\ket{\phi_k}$ and $\ket{\Phi_k}$ belong in the 
range of $U_k$ for $k=1,\ldots ,m$. Hence, both 
$\ket{\phi_k}$ and $\ket{\Phi_k}$ belong  in the linear span 
of $\{\ket{\psi_v^k}\}_{v=1}^D$ for every 
$k \in {1, \ldots, m}$. Thus, there exist scalars  
$(a_{v_k}^k)_{v_k=1}^D$
 and $(A_{v_k}^k)_{v_k=1}^D$ such that
$$
\ket{\phi_k} = \sum_{v_k=1}^D a_{v_k}^k \ket{\psi_{v_k}^k},
\quad \text{and} \quad  
\ket{\Phi_k} = \sum_{v_k=1}^D A_{v_k}^k \ket{\psi_{v_k}^k}. 
$$
Therefore,
$$
\ket{\phi_1 \circ \cdots \circ \phi_m} = 
\left( \sum_{v_1=1}^D a_{v_1}^1 \ket{\psi_{v_1}^1} \right) \circ \cdots \circ \left( \sum_{v_m=1}^D a_{v_m}^m \ket{\psi_{v_m}^m} \right) = \sum_{v_1, \ldots , v_m=1}^D a_{v_1}^1 \cdots a_{v_m}^m \ket{\psi_{v_1}^1 \circ \cdots \circ \psi_{v_m}^m}.
$$
and similarly,
$$
\ket{\Phi_1 \circ \cdots \circ \Phi_m} =
\sum_{v_1', \ldots , v_m'=1}^D A_{v_1'}^1 \cdots A_{v_m'}^m \ket{\psi_{v_1'}^1 \circ \cdots \circ \psi_{v_m'}^m}.
$$
Thus,
$$
\braket{\phi_1 \circ \cdots \circ \phi_m}{\Phi_1 \circ \cdots \circ \Phi_m}=
\sum_{\substack{v_1, \ldots , v_m=1\\v_1^\prime, \ldots , v_m^\prime=1}}^D 
\overline{a_{v_1}^1} \cdots \overline{a_{v_m}^m} 
A_{v_1'}^1 \cdots A_{v_m'}^m 
\braket{\psi_{v_1}^1 \circ \cdots \circ \psi_{v_m}^m}{\psi_{v_1^\prime}^1 \circ \cdots \circ \psi_{v_m^\prime}^m}.
$$
Since the states $\ket{\psi_{v_1}^1 \circ \cdots \circ \psi_{v_m}^m}$ 
form an orthonormal set,  we have

\begin{align} \nonumber
\braket{\phi_1 \circ \cdots \circ \phi_m}{\Phi_1 \circ \cdots \circ \Phi_m}
&=
\sum_{\substack{v_1, \ldots , v_m=1\\v_1^\prime, \ldots , v_m^\prime=1}}^D 
\overline{a_{v_1}^1} \cdots \overline{a_{v_m}^m} 
A_{v_1^\prime}^1 \cdots A_{v_m^\prime}^m 
\delta_{(v_1, \ldots, v_m),(v_1^\prime, \ldots , v_m^\prime)}\\ \nonumber
&= 
\sum_{v_1,v_1^\prime=1}^D \overline{a_{v_1}^1} A_{v_1^\prime}^1 \delta_{v_1,v_1^\prime} \cdots \sum_{v_m,v_m^\prime=1}^D \overline{a_{v_m}^1} A_{v_m^\prime}^1 \delta_{v_m,v_m^\prime} \\ \nonumber
&= 
\sum_{v_1,v_1^\prime=1}^D \overline{a_{v_1}^1} A_{v_1^\prime}^1 \braket{\psi_{v_1}}{\psi_{v_1^\prime}} \cdots \sum_{v_m,v_m^\prime=1}^D \overline{a_{v_m}^1} A_{v_m^\prime}^1 \braket{\psi_{v_m}}{\psi_{v_m^\prime}} \\  \label{concat_tensor}
&= \braket{\phi_1}{\Phi_1} \cdots \braket{\phi_m}{\Phi_m}.
\end{align}

Since $\ket{\phi_k}$ and $\ket{\Phi_k}$ belong in the 
range of $U_k$ for $k=1,\ldots ,m$, there exist 
 vectors $(\ket{\chi_k})_{k=1}^m$,  $(\ket{X_k})_{k=1}^m$ in $\mathcal{K}$
such that $\phi_k = U_k (\chi_k)$ and $\Phi_k= U_k (X_k)$
for all 
$k=1, \ldots, m$. Hence, 
\begin{align}
\braket{(U_1 \circ \cdots \circ U_m)(\chi_1 \otimes \cdots \otimes \chi_m)}{(U_1 \circ \cdots \circ U_m)(X_1 \otimes \cdots \otimes X_m)} &= 
\braket{U_1 \chi_1 \circ \cdots \circ U_m \chi_m }{U_1 X_1 \circ \cdots \circ U_m X_m} \nonumber \\
&=\braket{U_1\chi_1}{U_1X_1} \cdots \braket{U_m \chi_m}{U_mX_m} \nonumber \\
&= \braket{\chi_1}{X_1} \cdots \braket{\chi_m}{X_m}, \label{E:inner_product_breaks}
\end{align}
(where the first equality is valid because of the definition
of $U_1 \circ \cdots \circ U_m$, the second equality is valid because of Equation~\eqref{concat_tensor}, and the third equality is 
valid since $U_1, \ldots , U_m$ are isometries).
Thus, $U_1 \circ \cdots \circ U_m$ maps an orthonormal basis 
of $\mathcal{K}^{\otimes m}$ to an orthonormal set of 
$(\mathbb{C}^2)^\oplus$, and thus it is an isometry. In particular, by taking $\phi_k=\Phi_k$, (and hence $\chi_k=X_k$) for $k=1, \ldots , m$, Equation~\eqref{E:inner_product_breaks} gives that 
$U_1 \circ \cdots \circ U_m$ satisfies
$$
\norm{(U_1 \circ \cdots \circ U_m )(\chi_1 \otimes \cdots \otimes \chi_m)}=
\norm{\ket{\chi_1}} \cdots \norm{\ket{\chi_m}}=
\norm{\chi_1 \otimes \cdots \otimes \chi_m}.
$$
\end{proof}

Notice that Equation~\eqref{concat_tensor} gives the following important consequence:

\begin{Rmk} \label{rmk:well-definition_of_concatenation}
Let $(\ket{\psi_i})_{i=1}^D$ be a jointly orthonormal sequence in the Fock space, and $m$ be a fixed natural number. Then, on the linear span of $(\ket{\psi_i})_{i=1}^D$ concatenations of $m$-many vectors are well defined.	
\end{Rmk}

\section{The proof of Theorem 2.17} \label{main_theorem}

First, we need to introduce a well known lemma which is required in the proof of our main result.

\begin{Lem} \label{lemma}
For $z \in \N$ consider a non-increasing sequence of positive 
real numbers $Q_1 \geq Q_2 \geq \cdots \geq Q_z \geq 0$. 
Further, consider another arbitrary sequence of positive real numbers $l_1, l_2, \ldots, l_z$ and its 
non-decreasing enumeration 
$l_1^\prime \leq l_2^\prime \leq \cdots \leq l_z^\prime$. Then,
$$
\sum_{i=1}^z Q_i l_i^\prime \leq \sum_{i=1}^z Q_i l_i.
$$
\end{Lem}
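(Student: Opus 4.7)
The statement is a standard instance of the rearrangement inequality (in a one-sided form: pairing the largest $Q_i$ with the smallest $l_i$ minimizes the sum). The plan is to prove it by a bubble-sort style swap argument, reducing an arbitrary ordering of the $l_i$'s to the sorted ordering $(l_i')_{i=1}^z$ one transposition at a time while showing the sum never increases along the way.

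The first step is to set up a measure of ``how unsorted'' the sequence $(l_1, \ldots, l_z)$ is: define the number of inversions to be the cardinality of the set $\{(i,j) : i < j \text{ and } l_i > l_j\}$. If this number is zero, then $l_i = l_i'$ for every $i$ and the inequality holds with equality, so there is nothing to prove. Otherwise, pick any inversion, i.e.\ indices $i < j$ with $l_i > l_j$, and form a new sequence $(\tilde l_k)_{k=1}^z$ by swapping the entries in positions $i$ and $j$ and leaving the other entries fixed.

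The core computation is then the identity
$$
\sum_{k=1}^z Q_k l_k \;-\; \sum_{k=1}^z Q_k \tilde l_k \;=\; (Q_i l_i + Q_j l_j) - (Q_i l_j + Q_j l_i) \;=\; (Q_i - Q_j)(l_i - l_j) \;\geq\; 0,
$$
where the inequality uses $Q_i \geq Q_j$ (because $i < j$ and the $Q$'s are non-increasing) together with $l_i > l_j$ (the inversion). Thus the swap cannot increase the value of $\sum_k Q_k l_k$, and it strictly decreases the number of inversions by at least one (the swapped pair is resolved and no new inversion is created involving both $i$ and $j$; a careful check of the crossings with indices strictly between $i$ and $j$ shows the total inversion count does not increase from those either — a standard verification I would include in one line).

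Iterating this procedure must terminate after finitely many steps because the inversion count is a non-negative integer that strictly decreases each time, and the terminal configuration has zero inversions, hence coincides with the non-decreasing rearrangement $(l_i')_{i=1}^z$. Chaining the inequalities along this finite sequence of swaps gives $\sum_k Q_k l_k' \leq \sum_k Q_k l_k$, as required. I do not anticipate a genuine obstacle here; the only point that merits care is verifying that a single adjacent-style swap really does reduce the inversion count (which is immediate if one chooses adjacent inversions, i.e.\ $j = i+1$, so I would specialize to that case to keep the termination argument entirely elementary).
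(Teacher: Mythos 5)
Your proof is correct and takes essentially the same route as the paper: both are exchange arguments resting on the two-term swap inequality $(Q_i-Q_j)(l_i-l_j)\ge 0$ (the paper applies it greedily, placing the smallest remaining length with the largest remaining $Q_i$ one position at a time, while you organize the same swaps through an inversion count, with the specialization to adjacent transpositions making termination immediate). No gap.
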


\begin{proof}
Consider the two terms $(Q_1L + ql_1^\prime)$, (associated 
to $Q_1$ and to $l_1^\prime$), in the sum $\sum_{i=1}^z Q_i l_i$. Since $Q_1$ is the largest 
term in the sequence $\{Q_i\}_{i=1}^z$, we have $Q_1 \geq q$. Also, since $l_1^\prime$ is the smallest term in the 
sequence $\{l_i\}_{i=1}^z$, we have $l_1^\prime \leq L$. 
Thus, 
$$
 Q_1(L-l_1^\prime) \geq q(L-l_1^\prime) 
\Rightarrow  Q_1L + ql_1^\prime \geq Q_1l_1^\prime + qL .
$$
Therefore, by replacing the terms $(Q_1L + ql_1^\prime)$ in 
the sum $\sum_{i=1}^z Q_i l_i$ with the smaller quantity 
$(Q_1l_1^\prime + qL)$ yields a smaller sum 
$\sum_{i=1}^z Q_i l_i - (Q_1L + ql_1^\prime) + (Q_1l_1^\prime + qL)$. By continuing similarly for the next largest number $Q_2$ in the newly created sum $\sum_{i=1}^z Q_i l_i - (Q_1L + ql_1^\prime) + (Q_1l_1^\prime + qL)$  yields an even smaller sum. 
By continuing this process for all $Q_i$'s one by one in descending order, we obtain the sum $\sum_{i=1}^z Q_i l_i^\prime$ which is smaller than the original sum $\sum_{i=1}^z Q_i l_i$.
\end{proof}

Another preliminary material that we need for our main 
result is the following theorem which is a straightforward 
generalization of the quantum Kraft-McMillan inequality (and 
its converse) considered in \cite{Androulakis2019Sep}.

\begin{Thm} \label{generalized_Kraft}
Consider a Hilbert space $\mathcal{H}$ of dimension $D$ and a sequence of jointly orthonormal length codewords $(\ket{\psi_i})_{i=1}^D \subseteq (\C^2)^{\oplus}$ such that each $\ket{\psi_i} \in (\mathbb{C}^2)^{\otimes \ell_i}$ for some $\ell_i \in \N$. For $k \in \{1,\ldots, m\}$, if each $U_k = \sum_{i=1}^D \ketbra{\psi_i}{e_i^k} $, then the finite sequence of quantum codes
$(U_k)_{k=1}^m$
is uniquely decodable and the following inequality is satisfied
\begin{equation} \label{kraft_inequality}
\sum_{i=1}^D 2^{-\ell_i} \leq 1.
\end{equation}
Conversely, assume that 
$(U_k)_{k=1}^m$ is a sequence of quantum codes such that each $U_k = \sum_{i=1}^D \ketbra{\psi_i}{e_i^k}$, where $(\ket{\psi_i})_{i=1}^D \subseteq (\C^2)^{\oplus}$ is the
common sequence of length codewords such that each $\ket{\psi_i} \in (\mathbb{C}^2)^{\otimes \ell_i}$ for some $\ell_i \in \N$ and the lengths $(\ell_i)_{i=1}^D$ satisfy the Inequality~\eqref{kraft_inequality}.
Then, there exist a jointly orthonormal sequence of 
length codewords $(\ket{\psi_i^\prime})_{i=1}^D \subseteq (\C^2)^{\oplus}$
such that each $\ket{\psi_i'} \in (\mathbb{C}^2)^{\otimes \ell_i}$, and a uniquely decodable quantum code $(U_k^\prime)_{k=1}^m$ such that each $U_k^\prime = \sum_{i=1}^D \ketbra{\psi_i^\prime}{e_i^k}$.
\end{Thm}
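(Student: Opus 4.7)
The proof splits cleanly into the forward and converse halves, and in each direction I would reduce to a classical version via the standard basis of the Fock space.

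For the forward direction, unique decodability of $(U_k)_{k=1}^m$ is immediate: the hypothesis that $(\ket{\psi_i})_{i=1}^D$ is jointly orthonormal (Definition~\ref{jointly_orthonormal}) is, applied with exponent $m$, precisely the statement that the collection $\{\ket{\psi_{v_1}\circ\cdots\circ\psi_{v_m}} : (v_1,\ldots,v_m)\in\{1,\ldots,D\}^m\}$ is orthonormal. Theorem~\ref{m-orthonormal} then applies with $\ket{\psi_i^k}=\ket{\psi_i}$ for every $k$, giving unique decodability. For the Kraft inequality, I would run a quantum version of McMillan's trick using the full strength of joint orthonormality (not just the $m=1$ case). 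Fix any $n\in\N$. By Definition~\ref{jointly_orthonormal}, the set $\{\ket{\psi_{v_1}\circ\cdots\circ\psi_{v_n}}\}$ is orthonormal. Since each $\ket{\psi_i}$ is a length codeword in $(\C^2)^{\otimes\ell_i}$, every $n$-fold concatenation lies in $(\C^2)^{\otimes L}$ with $L=\ell_{v_1}+\cdots+\ell_{v_n}$, so grouping by total length $L$ yields
\begin{equation*}
\Bigl(\sum_{i=1}^D 2^{-\ell_i}\Bigr)^{\!n} \;=\; \sum_{v_1,\ldots,v_n=1}^D 2^{-(\ell_{v_1}+\cdots+\ell_{v_n})} \;=\; \sum_L A_L\,2^{-L},
\end{equation*}
where $A_L$ counts the $n$-tuples with $\sum_j\ell_{v_j}=L$. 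An orthonormal family inside the $2^L$-dimensional space $(\C^2)^{\otimes L}$ forces $A_L\leq 2^L$, and $L$ ranges over at most $n\ell_{\max}+1$ values (with $\ell_{\max}=\max_i\ell_i$). Hence $(\sum_i 2^{-\ell_i})^n\leq n\ell_{\max}+1$, and taking $n$-th roots as $n\to\infty$ gives $\sum_i 2^{-\ell_i}\leq 1$.

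For the converse I would invoke the classical converse of Kraft-McMillan: given $(\ell_i)_{i=1}^D$ satisfying the inequality, there exists a classical prefix-free (hence uniquely decodable) binary code $(\omega_i)_{i=1}^D$ with $|\omega_i|=\ell_i$. Embedding each $\omega_i$ as the corresponding standard basis vector $\ket{\omega_i}\in(\C^2)^{\otimes\ell_i}\subseteq(\C^2)^{\oplus}$, set $\ket{\psi_i'}:=\ket{\omega_i}$. These are length codewords of the required lengths. They are jointly orthonormal because, for any $m$ and any tuples, the concatenations $\ket{\omega_{v_1}\circ\cdots\circ\omega_{v_m}}$ are again standard basis vectors of the Fock space labelled by the classical concatenations $\omega_{v_1}\cdots\omega_{v_m}$, and these classical strings are pairwise distinct by unique decodability of the prefix-free code. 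Finally, $U_k':=\sum_{i=1}^D\ketbra{\psi_i'}{e_i^k}$ is an isometry (since both $(\ket{\psi_i'})$ and $(\ket{e_i^k})$ are orthonormal), and applying the already proved forward direction to $(\ket{\psi_i'})_{i=1}^D$ shows that $(U_k')_{k=1}^m$ is uniquely decodable, as claimed.

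The main obstacle is the Kraft half of the forward direction: the McMillan-style bound requires two ingredients that must be verified with care, namely that each concatenation lands in a single finite-dimensional summand $(\C^2)^{\otimes L}$ (which uses that each $\ket{\psi_i}$ is genuinely a length codeword rather than an indeterminate-length superposition) and that orthonormality of the concatenations is available for \emph{arbitrarily large} powers $n$, not merely the fixed $m$ of the hypothesis. This is where the stronger notion of joint orthonormality (Definition~\ref{jointly_orthonormal}) is used decisively; without the ``for every $m\in\N$'' clause the $n$-th root asymptotics would be unavailable and only weaker, $m$-dependent bounds could be obtained. The remaining parts of the argument are essentially bookkeeping: the unique-decodability halves in both directions delegate to Theorem~\ref{m-orthonormal}, and the converse delegates to the classical Kraft-McMillan theorem via the identification of classical bit strings with standard basis vectors of the Fock space.
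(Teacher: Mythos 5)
Your proposal is correct and follows essentially the same route as the paper's own proof: unique decodability via Theorem~\ref{m-orthonormal}, the Kraft inequality via the McMillan power trick (counting the orthonormal $n$-fold concatenations of total length $L$ inside the $2^L$-dimensional summand $(\C^2)^{\otimes L}$ and letting $n\to\infty$), and the converse via the classical Kraft--McMillan converse with prefix-free bit strings embedded as standard basis vectors. The only cosmetic difference is that you take $n$-th roots of the bound $(\sum_i 2^{-\ell_i})^n\leq n\ell_{\max}+1$ where the paper phrases the same conclusion as ``exponential versus linear growth,'' and your explicit remark that joint orthonormality must be invoked for arbitrarily large $n$ (not just the fixed $m$) is a point the paper uses implicitly.
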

\begin{proof}
Fix a jointly orthonormal sequence $\{\ket{\psi_i}\}_{i=1}^{D}$ such that each $\ket{\psi_i}$ is a length state i.e. $\Tr(\ketbra{\psi_i} \Lambda) = \ell_i \in \N$ or equivalently, $\ket{\psi_i} \in (\mathbb{C}^2)^{\otimes \ell_i}$. Consider a sequence of quantum codes $(U_k)_{k=1}^D$ where each isometry $U_k$ for $k \in \{1, \ldots, m\}$ is constructed by taking an arbitrary orthonormal basis $\{e_i^k\}_{i=1}^{D}$ but the same jointly orthonormal sequence $\{\ket{\psi_i}\}_{i=1}^{D}$ such 
that $U_k= \sum_{i=1}^D \ketbra{\psi_i}{e_i^k}$. By Definition~\ref{jointly_orthonormal}, it follows that the collection $\{\ket{\psi_{i_1} \circ \psi_{i_2} \circ \cdots \circ  \psi_{i_m}}\}$ for $\{i_1, \ldots, i_m\} \in \{1,\ldots,D\}$ is an orthonormal set. Then, by Theorem~\ref{m-orthonormal}, it follows that the sequence $(U_k)_{k=1}^D$ is a uniquely decodable quantum code. \\
For each $n, N \in \N$, let
$$
C_n^N = \{\ket{\Psi^n} \in (\mathbb{C}^2)^{\otimes N} : \ket{\Psi^n} = \ket{\psi_{i_1} \circ \psi_{i_2} \circ \cdots \circ  \psi_{i_n}} \ \text{for some} \  i_1, \ldots, i_n \in \{1,\ldots,D\} \}
$$
be the collection of states formed from concatenation of $n$-many codewords and having length $N$, and let
$$
d_\ell=\# \{i \in \{1, \ldots, D\} : \psi_i \in (\mathbb{C}^2)^{\otimes \ell}\}
$$ 
be the number of $\ket{\psi_i}$'s in 
$(\mathbb{C}^2)^{\otimes \ell}$ for each $\ell \in \N$. 
Since the sequence $(\ket{\psi_i})_{i=1}^D$ is jointly orthonormal, the elements of 
$C_n^N$ are 
pairwise orthogonal, hence linearly independent, and since 
they belong in $(\C^2)^{\otimes N}$ whose dimension is equal 
to $2^N$, we have
 $$
\# C_n^N =\sum_{\ell_{i_1} + \cdots + \ell_{i_n}=N} d_{\ell_{i_1}} d_{\ell_{i_2}} \ldots d_{\ell_{i_n}} \leq 2^N.
$$

Thus,
$$
2^{-N}\sum_{\ell_{i_1} + \cdots + \ell_{i_n}=N} d_{\ell_{i_1}} d_{\ell_{i_2}} \ldots d_{\ell_{i_n}} = \sum_{\ell_{i_1} + \cdots + \ell_{i_n}=N} (2^{-\ell_{i_1}} d_{\ell_{i_1}})(2^{-\ell_{i_2}} d_{\ell_{i_2}}) \cdots (2^{-\ell_{i_n}} d_{\ell_{i_n}}) \leq 1
$$
Set $\ell_{max} = \max\limits_{1 \leq i \leq D} \{\ell_i\}$ so that $N \leq n\ell_{max}$. 
Summing the above inequality over $N$ we obtain
$$
\sum_{\ell_{i_1}, \ell_{i_2}, \ldots, \ell_{i_n} =1} ^ {\ell_{max}} = (2^{\ell_{i_1}} d_{\ell_{i_1}})(2^{\ell_{i_2}} d_{\ell_{i_2}}) \cdots (2^{\ell_{i_n}} d_{\ell_{i_n}}) = \left(\sum_{\ell=1}^{\ell_{max}}2^{-\ell}{d_\ell}\right)^n \leq n\ell_{max}.
$$
Notice that the left hand side of this inequality is exponential while the right hand side is linear. This implies that the left hand side is bounded above by 1. Hence, we must have that 
$$
\sum_{\ell = 1}^{\ell_{max}} 2^{-\ell}d_{\ell} = \sum_{i=1}^D 2^{-\ell_i} \leq 1,
$$
and hence the classical Kraft-McMillan Inequality is also valid.

Conversely, suppose that $(U_k)_{k=1}^m$ is a sequence of 
quantum codes having a common sequence of length codewords 
$(\ket{\psi_i})_{i=1}^D$ and let $\ell_i \in \N \cup \{ 0 \}$ be the length of $\ket{\psi_i}$ for all 
$i=1, \ldots , D$. 
Assume that  Inequality~\eqref{kraft_inequality} is satisfied. Hence the classical Kraft-McMillan inequality is valid.
 So, by the converse of the classical Kraft-McMillan 
Theorem, there exists a set of bit strings 
$\{\psi_i^\prime\}_{i=1}^D$ such that there are exactly 
$d_\ell$ many bit strings with length $\ell$ for each 
$\ell \in \N$, and the sequence $(\psi_i')_{i=1}^D$ forms the image of a uniquely 
decodable (and in fact, instantaneous) classical code. 
Then, the sequence 
$(\ket{\psi_i^\prime})_{i=1}^D$ of the corresponding 
qubit strings satisfies 
$\ket{\psi_i^\prime} \in (\mathbb{C}^2)^{\otimes \ell_i}$
and every sequence of isometries $(U_k')_{k=1}^m$ 
such that each $U_k^\prime = \sum_{i=1}^D \ketbra{\psi_i^\prime}{e_i^k}$  (also called 
classical-quantum code as in \cite{Androulakis2019Sep}) for $k=1,\ldots,m$, forms a
uniquely decodable sequence of quantum codes
with a common sequence of jointly 
orthonormal length codewords.
\end{proof}

The last ingredient that we need for the proof of our main
result is the definition of concatenations of certain 
rank-1 operators on the Fock space. Notice that 
Remark~\ref{rmk:well-definition_of_concatenation}
ensures that the concatenation of rank-1 operators as 
introduced in the next definition is well defined.

\begin{Def} \label{def:concatenation_of_rank-1}
Let $(\ket{\psi_i})_{i=1}^D$ be a jointly orthonormal sequence of length
codewords in the Fock space. If $m \in \N$ and  
$I= (i_1, i_2, \ldots ,i_m) \in \{ 1, \ldots , D\}^m$, then 
let
$$
\ket{ \underset{i \in I}{\circ} \psi_i} := \ket{\psi_{i_1} \circ \psi_{i_2} \circ \cdots \circ\psi_{i_m}}.
$$
Let $m \in \N$,  
and $\ket{x}, \ket{y}$ be linear combinations of concatenations of $m$-many $\ket{\psi_i}$'s. Similarly,
let $n\in \N$ and $ \ket{z}, \ket{w}$ 
 be linear combinations of concatenations of $n$-many 
$\ket{\psi_i}$'s.
Thus, assume that 
$$
\ket{x} = \sum_i \mu_i \ket{\underset{i' \in I_{x,i}}{\circ} \psi_{i'}}, \quad
\ket{y}= \sum_j \mu_j \ket{\underset{j' \in I_{y,j}}{\circ} \psi_{j'}}, \quad
\ket{z}= \sum_k \mu_k \ket{\underset{k' \in I_{z,k}}{\circ} \psi_{k'}} \text{ and }
\ket{w}=\sum_\ell \mu_\ell \ket{\underset{\ell' \in I_{w,\ell}}{\circ} \psi_{\ell'}},
$$ 
where $\mu_i, \mu_j, \mu_k, \mu_\ell \in \C$,
$I_{x,i}, I_{y,j} \in \{ 1, \ldots , D\}^m$, and $I_{z,k},I_{w,\ell} \in \{ 1,\ldots ,D\}^{n}$  
for all $i,j,k,\ell$.
Then, we define the \textbf{concatenation of the rank-1 operators} $\ketbra{x}{y}$ and $\ketbra{z}{w}$ by
$$
\ketbra{x}{y} \circ \ketbra{z}{w} := \ketbra{x \circ z}{y \circ w}.
$$
\end{Def}

As a further clarification of the above definition, notice that under its setting, 
we have that if $I=(i_1,\ldots , i_m) \in \{1, \ldots , D\}^m$ and 
$J=(j_1,\ldots , j_n) \in \{ 1, \ldots , D\}^n$ for some $m,n \in \N$, 
then let $I\circ J \in \{1, \ldots , D\}^{m+n}$ be defined by $I \circ J =(i_1,\ldots , i_m,j_1, \ldots , j_n )$. 
Hence, notice that 
$$
\ket{x \circ z} = \sum_{i,k} \mu_i \mu_k \ket{\underset{m \in I_{x,i}\circ I_{z,k}}{\circ} \psi_m}
\text{ and }
\ket{y \circ w} = \sum_{j,\ell} \mu_j \mu_\ell \ket{\underset{n \in I_{y,j} \circ I_{w,\ell}}{\circ} \psi_n}. 
$$
Thus,
$$
\ketbra{x \circ z}{y \circ w} = \sum_{i,k,j,\ell} \mu_i \mu_k \overline{\mu_j \mu_\ell} 
\ketbra{\underset{m \in I_{x,i}\circ I_{z,k}}{\circ} \psi_m}{\underset{n \in I_{y,j} \circ I_{w,\ell}}{\circ} \psi_n}.
$$

Now we are ready to present the proof of our main result.

\begin{proof}[Proof of Theorem~\ref{Thm:main}] 
Let $\mathcal{U}= \{ U^{n_1, \ldots , n_{(k-1)l}} \}$ be any arbitrary
special block code, which is used to encode $m$ many blocks of pure states emitted by a given 
quantum stochastic source $\mathcal{S}$ (Definition~\ref{quantum_source}), where each block size is $l$. 
Let $(\ket{\psi_i})_{i=1}^{d^l} \subseteq (\mathbb{C}^2)^\oplus$ be 	
a sequence of jointly orthonormal length codewords which belongs in the range of every
member of $\mathcal{U}$ such that each $\ket{\psi_i}$ has 
length $\ell_i$ for some $\ell_i \in \N$ i.e.\ 
$\Tr(\ketbra{\psi_i} \Lambda) = \ell_i$, or equivalently 
$\ket{\psi_i} \in (\mathbb{C}^2)^{\otimes \ell_i}$. Let the orthonormal sequences be
$(\ket{e_i})_{i=1}^{d^l}$ for $k=1$ , and 
$(\ket{e_i^{n_1, \ldots , n_{(k-1)l}}})_{i=1}^{d^l}$ for 
$ 2 \leq k \leq m$ and 
$n_1, \ldots , n_{(k-1)l} \in \{ 1, \ldots , N \}$ such that 

\begin{equation} \label{E:definition_of_U}
U= \sum_{i=1}^{d^l} \ketbra{\psi_i}{e_i} 
\quad \text{and} \\ \quad
U^{n_1, \ldots , n_{(k-1)l}} = \sum_{i=1}^{d^l} \ketbra{\psi_i}{e_i^{n_1, \ldots , n_{(k-1)l}}} .
\end{equation}

Since we want to minimize the average codeword length, we encode the ensemble state $\rho_{ml}$ (Definition~\ref{general_ensemble}) for $ml$ emissions from the source $\mathcal{S}$ using the quantum codes of the special block code $\mathcal{U}$.
Recall that the average codeword length of $\mathcal{U}$ is defined in 
Definition~\ref{average_codeword} to be equal to
\begin{align*}
\sum_{n_1,\ldots,n_{ml}=1}^N p(n_1,\ldots,n_{ml})
& \Tr \Bigg( \ket{U (s_{n_1} \cdots s_{n_{l}}) \circ
		U^{n_1, \ldots , n_l} (s_{n_{l+1}} \cdots s_{n_{2l}}) \circ \cdots \circ U^{{n_1,\ldots , n_{(m-1)l}}} (s_{n_{(m-1)l+1}} \cdots s_{n_{ml}})} \\
&   \bra{
		U(s_{n_1} \cdots s_{n_{l}}) \circ 
		U^{n_1, \ldots , n_{n_l}} 
		(s_{n_{l+1}} \cdots s_{n_{2l}}) \circ \cdots 
		\circ U^{{n_1,\ldots , n_{(m-1)l}}} (s_{n_{(m-1)l+1}} \cdots s_{n_{ml}}) } \Lambda \Bigg)
\end{align*}

We will obtain the minimum average codeword length over all special block codes in two steps: in the first step, we optimize the orthonormal basis $\{\ket{e_i^{n_1, \ldots , n_{(k-1)l}}}\}_{i=1}^{d^l}$ for each isometry $U^{n_1, \ldots , n_{(k-1)l}} \in \mathcal{U}$, and in the second step, we optimize the common jointly orthonormal sequence of length codewords $(\ket{\psi_i})_{i=1}^{d^l}$. 

From now on, in the interest of minimizing the horizontal space used by equations, we will abbreviate 
certain scalars, pure states, mixed states and sums as 
in the following table 
for $r \leq t, i \in \N$. 


\begin{table}[H] 
	\centering
	\begin{tabular}{|l|l|l|} 
		\hline
		& The expression & is abbreviated by \\
		\hline
		\multirow{2}{1in}{Scalars:} & $p(n_r,\ldots,n_t)$ & $p[r,t]$ \\
		\cline{2-3}
		& $\lambda_i^{n_r, \ldots, n_t}$ & $\lambda_i^{[r,t]}$\\
		\hline
		\multirow{5}{1in}{Pure states:} & $\ket{s_{n_r} \cdots s_{n_t}}$ & $\ket{s_{[r,t]}}$ \\ 
		\cline{2-3}
		& $\ket{\lambda_i^{n_r, \ldots, n_t}}$ & $\ket{\lambda_i^{[r,t]}}$ \\ 
		\cline{2-3}
		& $\ket{e^{n_1, \ldots , n_t}_u}$   &  $\ket{e^t_u}$ \\ 
		\cline{2-3}
		&  $\ket{e_{u_0}e_{u_1}^{n_1,\ldots , n_l}e_{u_2}^{n_1,\ldots , n_{2l}} \cdots e_{u_t}^{n_1, \ldots , n_{tl}}}$  &  $\ket{e_{[u_0,u_t]}^{[0, tl]}}$ \\ 
		\cline{2-3}
		& $\ket{\psi_{u_r} \cdots \psi_{u_t}}$ &   $\ket{\psi_{[u_r,u_t]}}$  \\ 
		\hline

		Mixed states: & $\rho^{n_r, \ldots, n_t}$ & $\rho^{[r,t]}$ \\
		\hline
		\multirow{3}{1in}{Sums:} &
		$\sum_{n_r, \ldots, n_t =1}^N$ & $\sum_{[n_r,n_t]}^N$ \\
		\cline{2-3}
		& $\sum_{u_r, \ldots,u_t =1}^{d^l}$ &  $\sum_{[u_r,u_t]}^{d^l}$   \\
		\cline{2-3}
		& $\sum_{u_r, \ldots,u_t, u_r^{'}, \ldots, u_t^{'}=1}^{d^l}$ &  $\sum_{[u_r, u_t, u_r^{'},u_t^{'}]}^{d^l}$ \\
		\hline 
	\end{tabular}
	\caption{The list of all abbreviations that are used in this proof.\label{table_of_abbreviations}}
\end{table}
Also we accept the following convention.

\begin{Conv} \label{Conv:initial_terms}
	For $t<0$, the state
	$\ket{e_{u_0}e_{u_1}^{n_1,\ldots , n_l}e_{u_2}^{n_1,\ldots , n_{2l}} \cdots e_{u_t}^{n_1, \ldots , n_{tl}}}$ and its abbreviation $\ket{e_{[u_0,u_t]}^{[0, tl]}}$ do not exist
	and these terms should be ignored where they appear.
	For $t=0$, the state
	$\ket{e_{u_0}e_{u_1}^{n_1,\ldots , n_l}e_{u_2}^{n_1,\ldots , n_{2l}} \cdots e_{u_t}^{n_1, \ldots , n_{tl}}}$ and its abbreviation $\ket{e_{[u_0,u_t]}^{[0, tl]}}$ stand for the 
	state $\ket{e_{u_0}}$. 
\end{Conv}

Using these abbreviations, the average codeword length of $\mathcal{U}$ can be written as:
\begin{equation} \label{E:ACL_computation_1}
\begin{gathered}
	\sum_{[n_1,n_{ml}]}^N p[1, ml]
	\Tr \Big( \ket{U (s_{[1,l]}) \circ
		U^{n_1, \ldots , n_l} (s_{[l+1,2l]}) \circ \cdots \circ U^{{n_1,\ldots , n_{(m-1)l}}} (s_{[(m-1)l+1, ml]})} \\
	\bra{U(s_{[1,l]}) \circ 
		U^{n_1, \ldots , n_l} 
		(s_{[l+1,2l]}) \circ \cdots 
		\circ U^{{n_1,\ldots , n_{(m-1)l}}} (s_{[(m-1)l+1, ml]}) } \Lambda \Big).
\end{gathered}
\end{equation}

The expression inside the trace in the above formula of the average codeword length of $\mathcal{U}$ can be written as 
\begin{align*}
& \ket{U (s_{[1,l]})  \circ \cdots \circ U^{n_1,\ldots , n_{(m-2)l}}s_{[(m-2)l+1,(m-1)l]}} \circ
\ket{	U^{n_1,\ldots , n_{(m-1)l}} (s_{[(m-1)l+1, ml]})} \\
&  \bra{U (s_{[1,l]})  \circ \cdots \circ U^{n_1,\ldots , n_{(m-2)l}}s_{[(m-2)l+1,(m-1)l]}} \circ
\bra{	U^{n_1,\ldots , n_{(m-1)l}} (s_{[(m-1)l+1, ml]})}  \Lambda .
\end{align*}

We rewrite the last expression as
\begin{equation} \label{E:use_of_advanced_concatenations}
\begin{gathered}
\ketbra{U (s_{[1,l]})  \circ \cdots \circ U^{n_1,\ldots , n_{(m-2)l}} (s_{[(m-2)l+1,(m-1)l]})} \\
\circ
\ketbra{U^{n_1,\ldots , n_{(m-1)l}} (s_{[(m-1)l+1, ml]})}
\Lambda ,
\end{gathered}
\end{equation}
where the concatenation of the two rank-1 operators is defined via Definition~\ref{def:concatenation_of_rank-1}.
By \eqref{E:use_of_advanced_concatenations}, Expression~\eqref{E:ACL_computation_1} becomes:
\begin{align} \label{E:ACL_computation_2}
&\sum_{[n_1,n_{ml}]}^N p[1, ml] \nonumber\\
&\quad \quad \Tr \Big( \ketbra{U (s_{[1,l]})  \circ \cdots \circ U^{n_1,\ldots , n_{(m-2)l}}(s_{[(m-2)l+1,(m-1)l]})} \nonumber \\
&\quad \quad \circ
\ketbra{U^{n_1,\ldots , n_{(m-1)l}} (s_{[(m-1)l+1, ml]})} \Lambda \Big) \nonumber\\
&= \sum_{[n_1,n_{(m-1)l}]}^N p[1, (m-1)l] \nonumber\\
&\quad \quad \Tr \Bigg( \ketbra{U (s_{[1,l]})  \circ \cdots \circ U^{n_1,\ldots , n_{(m-2)l}}(s_{[(m-2)l+1,(m-1)l]})} \nonumber\\
& \quad \quad \circ \sum_{[n_{(m-1)l+1},n_{ml}]}^N \frac{p[1,ml]}{p[1,(m-1)l]} 
\ketbra{U^{n_1,\ldots , n_{(m-1)l}} (s_{[(m-1)l+1, ml]})} \Lambda \Bigg)\nonumber\\
&= \sum_{[n_1,n_{ml}]}^N p[1, (m-1)l]\nonumber\\
&\quad \quad \Tr \Bigg( \ketbra{U (s_{[1,l]})  \circ \cdots \circ U^{n_1,\ldots , n_{(m-2)l}}(s_{[(m-2)l+1,(m-1)l]})} \nonumber\\
& \quad \quad \circ U^{n_1,\ldots , n_{(m-1)l}} \Bigg( \sum_{[n_{(m-1)l+1},n_{ml}]}^N \frac{p[1,ml]}{p[1,(m-1)l]} 
\ketbra{ s_{[(m-1)l+1, ml]}} \Bigg) (U^{n_1,\ldots , n_{(m-1)l}})^\dagger \Lambda \Bigg)\nonumber\\
&= \sum_{[n_1,n_{ml}]}^N p[1, (m-1)l] \nonumber\\
&\quad \quad \Tr \Bigg( \ketbra{U (s_{[1,l]})  \circ \cdots \circ U^{n_1,\ldots , n_{(m-2)l}}(s_{[(m-2)l+1,(m-1)l]})} \nonumber\\
& \quad \quad \circ U^{n_1,\ldots , n_{(m-1)l}} \Big( \rho^{[1,(m-1)l]} \Big) (U^{n_1,\ldots , n_{(m-1)l}})^\dagger \Lambda \Bigg),
\end{align}
where $\rho^{[1,(m-1)l]}$ is the $m^{th}$ block conditional ensemble state as in Definition~\ref{block_ensemble_state}. 
If we use Equation~\eqref{E:definition_of_U}, the abbreviations listed in Table~\ref{table_of_abbreviations}, and the Convention~\ref{Conv:initial_terms},  
this expression becomes equal to the following:
\begin{equation} \label{E:ACL_computation_3}
\begin{gathered}
\sum_{[n_1,n_{(m-1)l}]}^N p[1,(m-1)l] \\
\Tr \Bigg(\sum_{[u_0,u_{m-2},u_0^{'}, u_{m-2}^{'}]}^{d^l} 
 \braket{e_{[u_0,u_{m-2}]}^{[0,(m-2)l]}}{s_{[1,{(m-1)l}]}} \braket{s_{[1,{(m-1)l}]}}{e_{[u_0^{'},u_{m-2}^{'}]}^{[0,(m-2)l]}}   \ketbra{\psi_{[u_0, u_{m-2}]}}{\psi_{[u_0^{'}, u_{m-2}^{'}]}} \\ 
\circ U^{{n_1,\ldots , n_{(m-1)l}}}
\rho^{[1, (m-1)l]}
(U^{{n_1,\ldots , n_{(m-1)l}}})^\dagger \ \Lambda \Bigg).
\end{gathered}
\end{equation}

Consider the spectral decomposition of the quantum state $\rho^{[1,(m-1)l]}$ given as
\begin{equation} \label{last_spectral}
	\rho^{[1,(m-1)l]} = \sum_{i=1}^{d^l} \lambda_{i}^{[1,(m-1)l]} \ketbra{\lambda_{i}^{[1,(m-1)l]}}{\lambda_{i}^{[1,(m-1)l]}}
\end{equation}
where each $\lambda_{i}^{[1,(m-1)l]}$ is an eigenvalue corresponding to the eigenvector $\ket{\lambda_{i}^{[1,(m-1)l]}}$, (see Table~\ref{table_of_abbreviations}). 
Without loss of generality, we assume that $\lambda_{1}^{[1,(m-1)l]} \geq \lambda_{2}^{[1,(m-1)l]} \cdots \geq \lambda_{d^l}^{[1,(m-1)l]} \geq 0$.
Thus, using Equation~\eqref{last_spectral} we obtain that 
the average codeword length of $\mathcal{U}$ (i.e.\ Expression~\eqref{E:ACL_computation_3})is equal to

\begin{align} 
& \sum_{[n_1,n_{(m-1)l}]}^N p[1,(m-1)l] \nonumber \\
& \quad \quad \Tr \Bigg( \sum_{[u_0,u_{m-2},u_0^{'}, u_{m-2}^{'}]}^{d^l} 
\braket{e_{[u_0,u_{m-2}]}^{[0,(m-2)l]}}{s_{[1,{(m-1)l}]}} \braket{s_{[1,{(m-1)l}]}}{e_{[u_0^{'},u_{m-2}^{'}]}^{[0,(m-2)l]}}   \ketbra{\psi_{[u_0, u_{m-2}]}}{\psi_{[u_0^{'}, u_{m-2}^{'}]}} \nonumber \\ 
& \quad \quad 
 \sum_{i = 1}^{d^l} \lambda_{i}^{[1,(m-1)l]} \sum_{[u_{m-1},u_{m-1}^{'}]}^{d^l}\braket{e_{u_{m-1}}^{m-1}}{\lambda_{i}^{[1,(m-1)l]}} \braket{\lambda_{i}^{[1,(m-1)l]}}{e_{u_{m-1}^{'}}^{m-1}} \ketbra{\psi_{u_{m-1}}}{\psi_{u_{m-1}^{'}}}   \Lambda 
\Bigg) \nonumber \\
&= \sum_{[n_1,n_{(m-1)l}]}^N p[1,(m-1)l] 
 \sum_{[u_0,u_{m-2},u_0^{'}, u_{m-2}^{'}]}^{d^l}  \braket{e_{[u_0,u_{m-2}]}^{[0,(m-2)l]}}{s_{[1,{(m-1)l}]}} \braket{s_{[1,{(m-1)l}]}}{e_{[u_0^{'},u_{m-2}^{'}]}^{[0,(m-2)l]}}  \nonumber \\  
& \quad \quad
\sum_{i = 1}^{d^l} \lambda_i^{[1,(m-1)l]} \sum_{[u_{m-1},u_{m-1}^{'}]}^{d^l}\braket{e_{u_{m-1}}^{m-1}}{\lambda_{i}^{[1,(m-1)l]}} \braket{\lambda_{i}^{[1,(m-1)l]}}{e_{u_{m-1}^{'}}^{m-1}}
\Tr \Bigg(\ketbra{\psi_{[u_0,u_{m-1}]}}{\psi_{[u_0^{'},u_{m-1}^{'}]}} \Lambda \Bigg)  \nonumber \\ 
&= \sum_{[n_1,n_{(m-1)l}]}^N p[1,(m-1)l] 
\sum_{[u_0,u_{m-2},u_0^{'}, u_{m-2}^{'}]}^{d^l}  \braket{e_{[u_0,u_{m-2}]}^{[0,(m-2)l]}}{s_{[1,{(m-1)l}]}} \braket{s_{[1,{(m-1)l}]}}{e_{[u_0^{'},u_{m-2}^{'}]}^{[0,(m-2)l]}}  \nonumber \\  
& \quad \quad
\sum_{i = 1}^{d^l} \lambda_{i}^{[1,(m-1)l]} \sum_{[u_{m-1},u_{m-1}^{'}]}^{d^l}\braket{e_{u_{m-1}}^{m-1}}{\lambda_{i}^{[1,(m-1)l]}} \braket{\lambda_{i}^{[1,(m-1)l]}}{e_{u_{m-1}^{'}}^{m-1}} \sum_{\ell=0}^\infty \ell \matrixel{\psi_{[u_0^{'},u_{m-1}^{'}]}}{\Pi_\ell}{\psi_{[u_0,u_{m-1}]}}    \label{collapse_um},
\end{align}
where in the last equality we used the cyclic property of trace and Equation~\eqref{length_observ}. 

Then, we can evaluate  part 
of the Equation~\eqref{collapse_um} as follows:
$$
\sum_{\ell=0}^\infty \ell \matrixel{\psi_{[u_0^{'},u_{m-1}^{'}]}}{\Pi_\ell}{\psi_{[u_0,u_{m-1}]}} 
= \left( \sum_{j=0}^{m-1} \ell_{u_j} \right) 
\left( \prod_{k=0}^{m-1} \delta_{u_k,u_k^{'}} \right)
$$ 
since $(\ket{\psi_{i}})_{i=1}^{d^l}$ is a jointly orthonormal 
sequence of length codewords (Definition~\ref{jointly_orthonormal}). So, 
Equation~\eqref{collapse_um} simplifies to the following expression:
\begin{equation} \label{bistochastic}
\sum_{[n_1,n_{(m-1)l}]}^N p[1,(m-1)l] \sum_{[u_0,u_{m-2}]}^{d^l} \left( \left| \braket{s_{[1,{(m-1)l}]}}{e_{[u_0,u_{m-2}]}^{[0,(m-2)l]}}
\right|^2  
 \sum_{i= 1}^{d^l} \lambda_{i}^{[1,(m-1)l]} \sum_{u_{m-1}=1}^{d^l}\left| \braket{\lambda_{i}^{[1,(m-1)l]}} {e_{u_{m-1}}^{m-1}}\right| ^2  \sum_{j=0}^{m-1} \ell_{u_j} \right) .
\end{equation}
Notice that since 
$\left\{ \ket{\lambda_i^{[1,(m-1)l]}} \right\}_{i=1}^{d^l}$
and $\left\{ \ket{e_{u_{m-1}}^{m-1}} \right\}_{u_{m-1}=1}^{d^l}$ are both  orthonormal bases of $\mathcal{H}^{\otimes l}$, 
 we have
$$
\sum_{i=1}^{d^l} \left| \braket{\lambda_i^{[1,(m-1)l]}} {e_{u_{m-1}}^{m-1}} \right|^2 
=  \sum_{u_{m-1}=1}^{d^l} \left| \braket{\lambda_i^{[1,(m-1)l]}}{e_{u_{m-1}}^{m-1}}\right|^2  
=  1.
$$
Therefore, for every $n_1, \ldots, n_{(m-1)l} \in 
\{ 1, \ldots , N \}$, the terms 
\begin{equation} \label{E:entries_of_B}
\left\{ 
\left| \braket{\lambda_i^{[1,(m-1)l]}}{e_{u_{m-1}}^{m-1}}\right|^2 
: i,u_{m-1}= 1, \ldots , d^l \right\}
\end{equation}
in Expression~\eqref{bistochastic} above, form  a 
$d^l \times d^l$ bistochastic matrix $B^{n_1, \ldots, n_{(m-1)l}}$. 
By the Birkoff-von~Neumann Theorem, every $d^l \times d^l$ 
bistochastic matrix can be written as a convex combination 
of at most $d^l!$ many permuation matrices. So,
\begin{equation}\label{b_matrix}
B^{n_1, \ldots, n_{(m-1)l}} = \sum_{h=1}^{d^l!} t_h^{n_1, \ldots, n_{(m-1)l}} P_h^{n_1, \ldots, n_{(m-1)l}}, 
\end{equation}
where $ \left\{ t_h^{n_1, \ldots, n_{(m-1)l}} \right\}_{h=1}^{d^l!}$ 
are  convex coefficients and 
$\left\{ P_h^{n_1, \ldots, n_{(m-1)l}} 
\right\}_{h=1}^{d^l!}$ are permutation matrices.
For every fixed 
$n_1,\ldots , n_{(m-1)l} \in \{ 1, \ldots , N \}$, define the 
function $f^{n_1, \ldots , n_{(m-1)l}}$ on the
$d^l\times d^l$ matrices $B=(B(i,u_{m-1}))_{i,u_{m-1}=1}^{d^l}$ and taking scalar values, as follows
$$
f^{n_1, \ldots , n_{(m-1)l}}(B) := \sum_{[u_0,u_{m-2}]}^{d^l}  \left| \braket{s_{[1,{(m-1)l}]}}{e_{[u_0,u_{m-2}]}^{[0,(m-2)l]}}
\right|^2  
\sum_{i= 1}^{d^l} \lambda_{i}^{[1,(m-1)l]} \sum_{u_{m-1}=1}^{d^l} B(i,u_{m-1}) \sum_{j=0}^{m-1} \ell_{u_j}. 
$$
Notice that the function $f^{n_1, \ldots , n_{(m-1)l}}$ is affine. Thus, for fixed  $n_1,\ldots , n_{(m-1)l} \in \{ 1, \ldots , N \}$, 
there exists $h \in \{ 1, \ldots , d^l! \}$, (which depends 
on $n_1,\ldots , n_{(m-1)l}$), such that the minimizer 
$B^{n_1,\ldots ,n_{(m-1)l}}$ of $f^{n_1, \ldots , n_{(m-1)l}}$ 
is equal to $P_h^{n_1,\ldots , n_{(m-1)l}}$, i.e.

\begin{align*}
& f^{n_1, \ldots , n_{(m-1)l}}(B^{n_1,\ldots ,n_{(m-1)l}}) \\
\geq &
f^{n_1, \ldots , n_{(m-1)l}}(P_h^{n_1,\ldots , n_{(m-1)l}}) \\
=& \sum_{[u_0,u_{m-2}]}^{d^l} \left| \braket{s_{[1,{(m-1)l}]}}{e_{[u_0,u_{m-2}]}^{[0,(m-2)l]}}\right|^2   
 \sum_{i= 1}^{d^l} \lambda_{i}^{[1,(m-1)l]} \sum_{u_{m-1}=1}^{d^l}   P_h^{n_1, \ldots, n_{(m-1)l}}(i,u_{m-1})  \sum_{j=0}^{m-1} \ell_{u_j} ,
\end{align*}
\noindent
where $P_h^{n_1, \ldots, n_{(m-1)l}}(i,u_{m-1})$ denotes the 
$(i, u_{m-1})$ entry of $P_h^{n_1, \ldots, n_{(m-1)l}}$.
Since $P_h^{n_1, \ldots, n_{(m-1)l}}$ is a $d^l\times d^l$ 
permutation matrix, for every $1\leq i \leq d^l$ there exists
a unique $g_i \in \{ 1, \ldots , d^l\}$ such that 
$P_h^{n_1, \ldots, n_{(m-1)l}}(i, g_i) = 1$, and all other 
entries of the $i^{\text{th}}$ row are equal to $0$. Hence,

\begin{equation} \label{E:entries_of_P}
P_h^{n_1, \ldots, n_{(m-1)l}} (i, u_{m-1})= \delta_{u_{m-1},g_i}.
\end{equation}

Therefore,

\begin{align}
& \sum_{[u_0,u_{m-2}]}^{d^l} \left| \braket{s_{[1,{(m-1)l}]}}{e_{[u_0,u_{m-2}]}^{[0,(m-2)l]}}\right|^2   
\sum_{i= 1}^{d^l} \lambda_{i}^{[1,(m-1)l]} \sum_{u_{m-1}=1}^{d^l}   P_h^{n_1, \ldots, n_{(m-1)l}}(i,u_{m-1})  \sum_{j=0}^{m-1} \ell_{u_j} \nonumber \\
=& \sum_{[u_0,u_{m-2}]}^{d^l} \left| \braket{s_{[1,{(m-1)l}]}}{e_{[u_0,u_{m-2}]}^{[0,(m-2)l]}}\right|^2   \sum_{i=1}^{d^l} \lambda_{i}^{[1,(m-1)l]} \Bigg(\sum_{j=0}^{m-2} \ell_{u_j} + \ell_{g_i} \Bigg). \label{lemma_use}
\end{align}

For the length sequence $(\ell_i)_{i=1}^{d^l}$ of the jointly orthonormal sequence of length codewords $(\ket{\psi})_{i=1}^{d^l}$, we can assume without loss of generality that 
\begin{equation} \label{E:increasing_l_s}
\ell_1 \leq \ell_2 \leq \cdots \leq \ell_{d^l}.
\end{equation}

Since we have also assumed that 
$\lambda_{1}^{[1,(m-1)l]} \geq \lambda_{2}^{[1,(m-1)l]} \geq  \cdots \geq \lambda_{d^l}^{[1,(m-1)l]} \geq 0$, we 
obtain by 
Lemma~\ref{lemma}, that Equation ~\eqref{lemma_use} is 
minimized when 

\begin{equation} \label{E:g_i=i}
g_i = i \text{ for }1 \leq i \leq d^l. 
\end{equation}

Equations~\eqref{E:entries_of_P} and \eqref{E:g_i=i} imply that the  
minimizer permutation matrix $P^{n_1, \ldots, n_{(m-1)l}}$ is
equal to the identity  $d^l \times d^l$ matrix. 
Since the entries of the input matrices of the function $f^{n_1,\ldots , n_{(m-1)l}}$ are given by \eqref{E:entries_of_B}, we obtain that 
$$
\left| \braket{\lambda_i^{[1,(m-1)l]}}{e_{u_{m-1}}^{m-1}}\right|
= \delta_{i, u_{m-1}} \quad \text{for }i,u_{m-1}= 1, \ldots , d^l.
$$ 
Since both vectors $\ket{\lambda_i^{[1,(m-1)l]}}$ and $\ket{e_{u_{m-1}}^{m-1}}$ are normalized, one can infer that 
$$
\ket{e_i^{m-1}} = \alpha_i \ket{\lambda_i^{[1,(m-1)l]}} \text{ for some } \alpha_i \in \mathbb{C} \text{ with } |\alpha_i| =1 \text{ for }1 \leq i \leq d^l.
$$
By ignoring the phase factors $\alpha_i$'s, (since they 
do not 
affect the average codeword length of $\mathcal{U}$, as 
it can be seen 
by the Expression~\eqref{bistochastic}), we conclude that 
the average codeword length of $\mathcal{U}$ will be minimized if we use the following isometry to encode the last ($m^\text{th}$) block:
\begin{equation} \label{E:U_m}
U^{n_1,\ldots , n_{(m-1)l}} = \sum_{i=1}^{d^l} \ketbra{\psi_i}{\lambda_i^{[1,(m-1)l]}}.
\end{equation}
Equation~\eqref{E:g_i=i} also gives that the minimum of  Expression~\eqref{bistochastic} is equal to
\begin{equation} \label{E:end_of_mth_step}
	\sum_{[n_1,n_{(m-1)l}]}^N p[1,(m-1)l] \sum_{[u_0,u_{m-2}]}^{d^l} \left| \braket{s_{[1,{(m-1)l}]}}{e_{[u_0,u_{m-2}]}^{[0,(m-2)l]}}\right|^2   \sum_{i=1}^{d^l} \lambda_{i}^{[1,(m-1)l]} \Bigg(\sum_{j=0}^{m-2} \ell_{u_j} + \ell_i \Bigg).
\end{equation}
One can proceed with finding the optimal orthonormal basis 
for $(m-1)^{th}$ block in a similar manner, as we explain now. 
Equation~\eqref{E:end_of_mth_step} can be split into two sums 
as follows:

\begin{equation} \label{E:begining_of_m-1_block_part_1}
\begin{gathered}
\sum_{[n_1,n_{(m-1)l}]}^N p[1,(m-1)l] \sum_{[u_0,u_{m-2}]}^{d^l} \left| \braket{s_{[1,{(m-1)l}]}}{e_{[u_0,u_{m-2}]}^{[0,(m-2)l]}}\right|^2 \sum_{i=1}^{d^l} \lambda_i^{[1,(m-1)l]} \ell_i \ + \\ 
\sum_{[n_1,n_{(m-1)l}]}^N p[1,(m-1)l] \sum_{[u_0,u_{m-2}]}^{d^l}  \left| \braket{s_{[1,{(m-1)l}]}}{e_{[u_0,u_{m-2}]}^{[0,(m-2)l]}}\right|^2 \sum_{i=1}^{d^l} \lambda_i^{[1,(m-1)l]}\left( \sum_{j=0}^{m-2} \ell_{u_j}\right)  .
\end{gathered}
\end{equation}

Note that 

$$
\sum_{[u_0,u_{m-2}]}^{d^l} \left| \braket{s_{[1,{(m-1)l}]}}{e_{[u_0,u_{m-2}]}^{[0,(m-2)]}}\right|^2  = \norm{\ket{s_{[1,{(m-1)l}]}}}^2=1 \quad \text{and} \quad \sum_{i=1}^{d^l} \lambda_i^{[1,(m-1)l]} = 1,
$$

hence Expression~\eqref{E:begining_of_m-1_block_part_1} simplifies as 
follows:

\begin{equation} \label{E:begining_of_m-1_block_part_2}
\begin{gathered}
\sum_{[n_1,n_{(m-1)l}]}^N p[1,(m-1)l] \sum_{i= 1}^{d^l} \lambda_{i}^{[1,(m-1)l]} \ell_{i} \  + \\ 
\sum_{[n_1,n_{(m-1)l}]}^N p[1,(m-1)l] \sum_{[u_0,u_{m-2}]}^{d^l} \left| \braket{s_{[1,{(m-1)l}]}}{e_{[u_0,u_{m-2}]}^{[0,(m-2)l]}}\right|^2 \left( \sum_{j=0}^{m-2} \ell_{u_j}\right) .   
\end{gathered}
\end{equation}
It is worth mentioning at this point that splitting Equation~\eqref{E:end_of_mth_step} followed by some simplifications gave Equation~\eqref{E:begining_of_m-1_block_part_2} which has two parts: a) the first line i.e. $\sum_{[n_1,n_{(m-1)l}]}^N p[1,(m-1)l] \sum_{i= 1}^{d^l} \lambda_{i}^{[1,(m-1)l]} \ell_{i}$ represents the minimum average codeword length for the $m^{th}$ (last) block for the fixed set of lengths $(\ell_i)_{i=1}^{d^l}$ and b) the second line represents the sum of the average codeword lengths for the first $(m-1)$ blocks which still needs to be minimized. As we will see below, this second line can be similarly split further into the minimum average codeword length for the $(m-1)^{th}$ block for the given set of lengths $(\ell_i)_{i=1}^{d^l}$ and the sum of the average codeword lengths for the first $(m-2)$ blocks subject to further minimization. Therefore, the minimum average codeword length for $m$ blocks is the sum of the minimum average codeword lengths for each block.\\

Notice that the second line of Expression~\eqref{E:begining_of_m-1_block_part_2}
can be simplified as follows:
\begin{align*}
&\sum_{[u_0,u_{m-2}]}^{d^l} \left| \braket{s_{[1,{(m-1)l}]}}{e_{[u_0,u_{m-2}]}^{[0,(m-2)l]}}\right|^2 \left( \sum_{j=0}^{m-2} \ell_{u_j}\right) \\
=& \sum_{[u_0,u_{m-3}]}^{d^l} \left| \braket{s_{[1,{(m-2)l}]}}{e_{[u_0,u_{m-3}]}^{[0,(m-3)l]}}\right|^2 \sum_{u_{m-2}=1}^{d^l} \left| \braket{s_{[{(m-2)l+1},{(m-1)l]}}}{e_{u_{m-2}}^{(m-2)l}}\right|^2 \left( \sum_{j=0}^{m-3} \ell_{u_j} + \ell_{u_{m-2}} \right)    \\
= & \sum_{[u_0,u_{m-3}]}^{d^l} \left| \braket{s_{[1,{(m-2)l}]}}{e_{[u_0,u_{m-3}]}^{[0,(m-3)l]}}\right|^2 \sum_{u_{m-2}=1}^{d^l} \left| \braket{s_{[{(m-2)l+1},{(m-1)l]}}}{e_{u_{m-2}}^{(m-2)l}}\right|^2 \left( \sum_{j=0}^{m-3} \ell_{u_j} \right) +\\
& \sum_{[u_0,u_{m-3}]}^{d^l} \left| \braket{s_{[1,{(m-2)l}]}}{e_{[u_0,u_{m-3}]}^{[0,(m-3)l]}}\right|^2 \left( \sum_{u_{m-2}=1}^{d^l} \left| \braket{s_{[{(m-2)l+1},{(m-1)l]}}}{e_{u_{m-2}}^{(m-2)l}}\right|^2 \ell_{u_{m-2}} \right) \\
= & \sum_{[u_0,u_{m-3}]}^{d^l} \left| \braket{s_{[1,{(m-2)l}]}}{e_{[u_0,u_{m-3}]}^{[0,(m-3)l]}}\right|^2 \norm{\ket{s_{[{(m-2)l+1},{(m-1)l]}}}} ^2\left( \sum_{j=0}^{m-3} \ell_{u_j} \right) +\\
& \sum_{[u_0,u_{m-3}]}^{d^l} \left| \braket{s_{[1,{(m-2)l}]}}{e_{[u_0,u_{m-3}]}^{[0,(m-3)l]}}\right|^2 \left( \sum_{u_{m-2}=1}^{d^l}  \braket{e_{u_{m-2}}^{(m-2)l}}{s_{[{(m-2)l+1},{(m-1)l]}}} \braket{s_{[{(m-2)l+1},{(m-1)l]}}}{e_{u_{m-2}}^{(m-2)l}} \ell_{u_{m-2}} \right) \\
= & \sum_{[u_0,u_{m-3}]}^{d^l} \left| \braket{s_{[1,{(m-2)l}]}}{e_{[u_0,u_{m-3}]}^{[0,(m-3)l]}}\right|^2 \left( \sum_{j=0}^{m-3} \ell_{u_j} \right) +\\
& \sum_{[u_0,u_{m-3}]}^{d^l} \left| \braket{s_{[1,{(m-2)l}]}}{e_{[u_0,u_{m-3}]}^{[0,(m-3)l]}}\right|^2 \left( \sum_{u_{m-2}=1}^{d^l} \ell_{u_{m-2}}  \braket{e_{u_{m-2}}^{(m-2)l}}{s_{[{(m-2)l+1},{(m-1)l]}}}
\braket{s_{[{(m-2)l+1},{(m-1)l]}}}{e_{u_{m-2}}^{(m-2)l}} \right). 
\end{align*}

By substituting the last quantity in Expression~\eqref{E:begining_of_m-1_block_part_2} we obtain

\begin{equation} \label{E:begining_of_m-1_block_part_2.5}
\begin{gathered}
\sum_{[n_1,n_{(m-1)l}]}^N p[1,(m-1)l] \sum_{i= 1}^{d^l} \lambda_{i}^{[1,(m-1)l]} \ell_{i} \  +  \\ 
\sum_{[n_1,n_{(m-1)l}]}^N p[1,(m-1)l]  \sum_{[u_0,u_{m-3}]}^{d^l} \left| \braket{s_{[1,{(m-2)l}]}}{e_{[u_0,u_{m-3}]}^{[0,(m-3)l]}}\right|^2 \left( \sum_{j=0}^{m-3} \ell_{u_j} \right) +  \\
\sum_{[n_1,n_{(m-1)l}]}^N p[1,(m-1)l] \sum_{[u_0,u_{m-3}]}^{d^l} \left| \braket{s_{[1,{(m-2)l}]}}{e_{[u_0,u_{m-3}]}^{[0,(m-3)l]}}\right|^2  \\
\left(\sum_{u_{m-2}=1}^{d^l} \ell_{u_{m-2}}  \braket{e_{u_{m-2}}^{(m-2)l}}{s_{[{(m-2)l+1},{(m-1)l]}}}\braket{s_{[{(m-2)l+1},{(m-1)l]}}}{e_{u_{m-2}}^{(m-2)l}} \right) .
\end{gathered}
\end{equation}

Note that since the bras $\bra{s_{n_{(m-2)l+1}} \cdots s_{n_{(m-1)l}}}$ do not appear any more in 
the second line of Expression~\eqref{E:begining_of_m-1_block_part_2.5}, the probability distributions $p[1,(m-1)l]$ 
in the second line of the Expression~\eqref{E:begining_of_m-1_block_part_2.5} when summed over the indices 
$n_{(m-2)l+1}, \ldots , n_{(m-1)l}$ simplify to $p[1,(m-2)l]$.
Thus, Expression~\eqref{E:begining_of_m-1_block_part_2.5} simplifies to

\begin{align} \label{E:begining_of_m-1_block_part_3}
& \sum_{[n_1,n_{(m-1)l}]}^N p[1,(m-1)l] \sum_{i= 1}^{d^l} \lambda_{i}^{[1,(m-1)l]} \ell_{i} \  + \nonumber \\ 
& \sum_{[n_1,n_{(m-2)l}]}^N p[1,(m-2)l]  \sum_{[u_0,u_{m-3}]}^{d^l} \left| \braket{s_{[1,{(m-2)l}]}}{e_{[u_0,u_{m-3}]}^{[0,(m-3)l]}}\right|^2 \left( \sum_{j=0}^{m-3} \ell_{u_j} \right) + \nonumber \\
& \sum_{[n_1,n_{(m-1)l}]}^N p[1,(m-1)l] \sum_{[u_0,u_{m-3}]}^{d^l} \left| \braket{s_{[1,{(m-2)l}]}}{e_{[u_0,u_{m-3}]}^{[0,(m-3)l]}}\right|^2 \nonumber \\
& \left(\sum_{u_{m-2}=1}^{d^l} \ell_{u_{m-2}}  \braket{e_{u_{m-2}}^{(m-2)l}}{s_{[{(m-2)l+1},{(m-1)l]}}}\braket{s_{[{(m-2)l+1},{(m-1)l]}}}{e_{u_{m-2}}^{(m-2)l}} \right) \nonumber \\
=& \sum_{[n_1,n_{(m-1)l}]}^N p[1,(m-1)l] \sum_{i= 1}^{d^l} \lambda_{i}^{[1,(m-1)l]} \ell_{i} \  + \nonumber \\ 
& \sum_{[n_1,n_{(m-2)l}]}^N p[1,(m-2)l]  \sum_{[u_0,u_{m-3}]}^{d^l} \left| \braket{s_{[1,{(m-2)l}]}}{e_{[u_0,u_{m-3}]}^{[0,(m-3)l]}}\right|^2 \left( \sum_{j=0}^{m-3} \ell_{u_j} \right) + \nonumber \\
& \sum_{[n_1,n_{(m-2)l}]}^N p[1,(m-2)l] \sum_{[u_0,u_{m-3}]}^{d^l} \left| \braket{s_{[1,{(m-2)l}]}}{e_{[u_0,u_{m-3}]}^{[0,(m-3)l]}}\right|^2 \nonumber \\
& \left(\sum_{u_{m-2}=1}^{d^l} \ell_{u_{m-2}}  \matrixel{e_{u_{m-2}}^{(m-2)l}}{\left( \sum_{[n_{(m-2)l+1},n_{(m-1)l}]}^N \frac{p[1,(m-1)l]}{p[1,(m-2)l]} \ketbra{s_{[{(m-2)l+1},{(m-1)l]}}}{s_{[{(m-2)l+1},{(m-1)l]}}}\right) }{e_{u_{m-2}}^{(m-2)l}} \right) \nonumber \\
=& \sum_{[n_1,n_{(m-1)l}]}^N p[1,(m-1)l] \sum_{i= 1}^{d^l} \lambda_{i}^{[1,(m-1)l]} \ell_{i} \  + \nonumber \\ 
& \sum_{[n_1,n_{(m-2)l}]}^N p[1,(m-2)l]  \sum_{[u_0,u_{m-3}]}^{d^l} \left| \braket{s_{[1,{(m-2)l}]}}{e_{[u_0,u_{m-3}]}^{[0,(m-3)l]}}\right|^2 \left( \sum_{j=0}^{m-3} \ell_{u_j} \right) + \nonumber \\
& \sum_{[n_1,n_{(m-2)l}]}^N p[1,(m-2)l] \sum_{[u_0,u_{m-3}]}^{d^l} \left| \braket{s_{[1,{(m-2)l}]}}{e_{[u_0,u_{m-3}]}^{[0,(m-3)l]}}\right|^2 \nonumber \\
& \left(\sum_{u_{m-2}=1}^{d^l} \ell_{u_{m-2}}  \matrixel{e_{u_{m-2}}^{(m-2)l}}{ \rho^{[1,(m-2)l]}}{e_{u_{m-2}}^{(m-2)l}} \right).
\end{align}

As in Equation~\eqref{last_spectral}, consider the spectral decomposition of $\rho^{[1,(m-2)l]}$,
\begin{equation} \label{secondlast_spectral}
\rho^{[1,(m-2)l]} = \sum_{i=1}^{d^l} \lambda_{i}^{[1,(m-2)l]} \ketbra{\lambda_{i}^{[1,(m-2)l]}}{\lambda_{i}^{[1,(m-2)l]}}.
\end{equation}
By substituting Equation~\eqref{secondlast_spectral} in Equation~\eqref{E:begining_of_m-1_block_part_3} we obtain:

\begin{equation} \label{E:begining_of_m-1_block_part_4}
\begin{gathered}
\sum_{[n_1,n_{(m-1)l}]}^N p[1,(m-1)l] \sum_{i= 1}^{d^l} \lambda_{i}^{[1,(m-1)l]} \ell_{i} \  +  \\ 
 \sum_{[n_1,n_{(m-2)l}]}^N p[1,(m-2)l]  \sum_{[u_0,u_{m-3}]}^{d^l} \left| \braket{s_{[1,{(m-2)l}]}}{e_{[u_0,u_{m-3}]}^{[0,(m-3)l]}}\right|^2 \left( \sum_{j=0}^{m-3} \ell_{u_j} \right) +  \\
 \sum_{[n_1,n_{(m-2)l}]}^N p[1,(m-2)l] \sum_{[u_0,u_{m-3}]}^{d^l} 
 \left| \braket{s_{[1,{(m-2)l}]}}{e_{[u_0,u_{m-3}]}^{[0,(m-3)l]}}\right|^2  \\
\left( \sum_{i=1}^{d^l}  \lambda_{i}^{[1,(m-2)l]} \sum_{u_{m-2}=1}^{d^l} \ell_{u_{m-2}}   \left| \braket{\lambda_{i}^{[1,(m-2)l]}}{e_{u_{m-2}}^{(m-2)l}} \right|^2   \right) .
\end{gathered}
\end{equation}

Similarly as before, (see Expression~\eqref{bistochastic}), for every $n_1, \ldots, n_{(m-2)l}$, 
each $\left| \braket{\lambda_i^{[1,(m-2)l]}}{e_{u_{m-2}}^{(m-2)}}\right|^2$ in Equation~\eqref{E:begining_of_m-1_block_part_4} 
above forms the $(i,u_{m-2})$ element of a $d^l \times d^l$ bistochastic matrix $B^{n_1, \ldots, n_{(m-2)l}}$ 
for $1\leq i, u_{m-2} \leq d^l$. 
By the Birkoff-von~Neumann Theorem and Lemma~\ref{lemma}, Equation~\eqref{E:begining_of_m-1_block_part_4} attains its minimum at the 
 $d^l \times d^l$ identity permutation matrix. 
Thus, 
$$
\ket{e_i^{m-2}} = \alpha_i \ket{\lambda_i^{[1,(m-2)l]}} \text{ for some } \alpha_i \in \mathbb{C} \text{ with } |\alpha_i| =1 \text{ for }1 \leq i \leq d^l,
$$

and hence, by ignoring the phase factors $\alpha_i$'s as before,  we conclude that
the average codeword length of $\mathcal{U}$ will be minimized if we use the following isometry to encode the last $(m-1)^\text{th}$ block:

\begin{equation} \label{E:U_m-1}
U^{n_1,\ldots , n_{(m-2)l}} = \sum_{i=1}^{d^l} \ketbra{\psi_i}{\lambda_i^{[1,(m-2)l]}}.
\end{equation} 

In this case,
Equation~\eqref{E:begining_of_m-1_block_part_4} simplifies to

\begin{align} \label{E:begining_of_m-1_block_part_5}
&\sum_{[n_1,n_{(m-1)l}]}^N p[1,(m-1)l] \sum_{i= 1}^{d^l} \lambda_{i}^{[1,(m-1)l]} \ell_{i} \  +  \nonumber \\ 
& \sum_{[n_1,n_{(m-2)l}]}^N p[1,(m-2)l]  \sum_{[u_0,u_{m-3}]}^{d^l} \left| \braket{s_{[1,{(m-2)l}]}}{e_{[u_0,u_{m-3}]}^{[0,(m-3)l]}}\right|^2 \left( \sum_{j=0}^{m-3} \ell_{u_j} \right) +  \nonumber \\
& \sum_{[n_1,n_{(m-2)l}]}^N p[1,(m-2)l] \sum_{[u_0,u_{m-3}]}^{d^l} 
\left| \braket{s_{[1,{(m-2)l}]}}{e_{[u_0,u_{m-3}]}^{[0,(m-3)l]}}\right|^2  
\left( \sum_{i=1}^{d^l}  \lambda_{i}^{[1,(m-2)l]}  \ell_i    \right) \nonumber \\
= & \sum_{[n_1,n_{(m-1)l}]}^N p[1,(m-1)l] \sum_{i= 1}^{d^l} \lambda_{i}^{[1,(m-1)l]} \ell_{i} \  +  \nonumber \\ 
& \sum_{[n_1,n_{(m-2)l}]}^N p[1,(m-2)l]  \sum_{[u_0,u_{m-3}]}^{d^l} \left| \braket{s_{[1,{(m-2)l}]}}{e_{[u_0,u_{m-3}]}^{[0,(m-3)l]}}\right|^2 \left( \sum_{i=1}^{d^l}  \lambda_{i}^{[1,(m-2)l]}  \ell_i    +\sum_{j=0}^{m-3} \ell_{u_j} \right) .
\end{align}

By comparing the Expressions~\eqref{E:end_of_mth_step} and \eqref{E:begining_of_m-1_block_part_5}, one can easily guess the 
formula of the minimum average codeword length of $\mathcal{U}$ obtained by optimizing the orthonormal basis for each isometry in the family $\mathcal{U}$. We chose to write the details of the first two steps rather than 
the formal induction step, since it is easier to be followed by the reader. The minimum average length of $\mathcal{U}$ is
equal to

\begin{align} \label{final}
&\sum_{[n_1,n_{(m-1)l}]}^N p[1,(m-1)l] \sum_{i= 1}^{d^l} \lambda_{i}^{[1,(m-1)l]} \ell_{i} +
\sum_{[n_1,n_{(m-2)l}]}^N p[1,(m-2)l] \sum_{i= 1}^{d^l} \lambda_{i}^{[1,(m-2)l]} \ell_{i} + \nonumber \\
& \sum_{[n_1,n_{(m-3)l}]}^N p[1,(m-3)l] \sum_{i= 1}^{d^l} \lambda_{i}^{[1,(m-3)l]} \ell_{i} +  \cdots + 
\sum_{n_1=1}^N p[1,l] \sum_{i= 1}^{d^l} \lambda_{i}^{[1,l]} \ell_{i} + 
\sum_{i=1}^{d^l} \lambda_i \ell_i \nonumber \\
= & \sum_{j=1}^{m-1}\left(\sum_{[n_1,n_{(m-j)l}]}^N p[1,(m-j)l] \sum_{i= 1}^{d^l} \lambda_{i}^{[1,(m-j)l]} \ell_{i} \right) +
\sum_{i=1}^{d^l} \lambda_i \ell_i,
\end{align}
\noindent
where $\{\lambda_i\}_{i=1}^{d^l}$ are the eigenvalues of the ensemble state $\rho_l$, i.e.
$$
\rho_l = \sum_{[n_1,n_l]}^{d^l} p[1,l]\ketbra{s_{[1,l]}}  
=\sum_{i=1}^{d^l} \lambda_i \ketbra{\lambda_i} ,
$$
where $\lambda_i$ is the eigenvalue corresponding to the 
eigenvector $\ket{\lambda_i}$.

Similarly to Equations~\eqref{E:U_m} and \eqref{E:U_m-1}, we 
obtain that the average codeword length of $\mathcal{U}$ will be 
minimized if we use the following isometries to encode each of 
the blocks:

\begin{equation} \label{E:U_m-j}
U^{n_1,\ldots , n_{(m-j)l}} = \sum_{i=1}^{d^l} \ketbra{\psi_i}{\lambda_i^{[1,(m-j)l]}}, \text{ for }
j=1, \ldots , m-1, \text{ and } U = \sum_{i=1}^{d^l} \ketbra{\psi_i}{\lambda_i}.
\end{equation}

Equation~\eqref{final} can be re-written as 
\begin{equation} \label{re-write_final}
\sum_{j=2}^{m}\left(\sum_{[n_1,n_{(j-1)l}]}^N p[1,(j-1)l] \sum_{i= 1}^{d^l} \lambda_{i}^{[1,(j-1)l]} \ell_{i} \right) +
\sum_{i=1}^{d^l} \lambda_i \ell_i,
\end{equation}

Now we move to the optimization of the sequence of jointly orthonormal length codewords $(\ket{\psi_i})_{i=1}^{d^l}$.
By the 
forward direction of quantum Kraft-McMillan inequality (Theorem~\ref{generalized_Kraft}), any finite sequence of quantum codes from the special 
block code is uniquely decodable and the length sequence $(\ell_i)_{i=1}^{d^l}$ of the jointly orthonormal sequence of length codewords $(\ket{\psi})_{i=1}^{d^l}$ must satisfy the classical Kraft-McMillan inequality 
\begin{equation} \label{classical_Kraft}
	\sum_{i=1}^{d^l} 2^{-\ell_i} \leq 1.
\end{equation}
Also recall that  without loss of generality, the 
Inequality~\eqref{E:increasing_l_s} is satisfied. 
Thus, in order to minimize the average codeword length of $\mathcal{U}$, 
we need to choose the optimal length sequence $(\ell_i)_{i=1}^{d^l}$, 
which in addition to satisfying 
Equations~\eqref{classical_Kraft} and\eqref{E:increasing_l_s},
also minimizes Equation~\eqref{re-write_final}. Notice that a complete binary Kraft tree of height $\lceil \log_2 {d^l} \rceil$ has at least $d^l$ many leaf nodes which are enough to assign uniquely decodable codewords to any set of $d^l$ many symbols. Therefore, for any optimal length 
sequence $(\ell_i)_{i=1}^{d^l}$, we have 
$1 \leq \ell_i \leq \ell_{max} = \lceil \log_2 {d^l} \rceil$ for every $i$. So, it is clear that one needs to test finitely many such 
sequences of lengths $( \ell_i )_{i=1}^{d^l}$ to find the 
optimal length sequence, and therefore
the minimizer that we seek is well defined.

Once the sequence $(\ell_i)_{i=1}^{d^l}$ has been chosen as described in the above paragraph, then 
choose a classical binary sequence $(\omega_i)_{i=1}^{d^l}$ using the binary Kraft tree as shown in 
\cite[Page 107, Theorem 5.2.1]{Cover2005Apr} such that the length of 
the codeword $\omega_i$ (i.e.\ the number of its binary digits), is equal to $\ell_i$. Now, choose 
the sequence $(\ket{\psi_i})_{i=1}^{d^l}$ to be equal to the sequence of qubit strings $(\ket{\omega_i})_{i=1}^{d^l}$.
It is worth mentioning that the codewords $(\omega_i)_{i=1}^{d^l}$ obtained from the construction given in \cite[Page 107, Theorem 5.2.1]{Cover2005Apr} are classical prefix-free codes. However, it is well known by the classical Kraft-McMillan inequality (and its converse) that for every uniquely decodable codeword sequence with the lengths $(\ell_i)_{i=1}^{d^l}$, there also exists a prefix-free codeword sequence with the same lengths. So, it suffices to construct the sequence of jointly orthonormal length codewords $(\ket{\psi_i})_{i=1}^{d^l}$ from the classical prefix-free codes without affecting the optimality.
Additionally, since $(\omega_i)_{i=1}^{d^l}$ is prefix-free (and hence, uniquely decodable), their concatenations denoted by $\omega_{i_1} \circ \cdots \circ \omega_{i_m}$ for $i_1, \ldots, i_m \in \{1, \ldots, d^l\}$ are pairwise distinct for any $m \in \N$. Since each $\ket{\omega_i} \in (\mathbb{C}^2)^{\otimes \ell_i}$, the indeterminate length of $\ket{\omega_i}$ is equal to $\Tr(\Lambda \ketbra{\omega_i}) = \ell_i$. Hence, each $\ket{\omega_i}$ is a length codeword. Therefore, their concatenations $\ket{\omega_{i_1} \circ \cdots \circ \omega_{i_m}}$ of length $m$ are always well defined and orthonormal for every $m \in \N$. So, the sequence $(\ket{\omega_i})_{i=1}^{d^l}$ forms a sequence of jointly orthonormal length codewords. Lastly, notice that this sequence $(\ket{\omega_i})_{i=1}^{d^l}$ is not unique as one can find several such jointly orthonormal sequences of length codewords each having the same optimal length sequence.
Finally, 
the optimal special block code is then given by
$$
\mathcal{V}= \Big\{ V^{n_1, \ldots , n_{(k-1)l}}: 
k \in \{ 1, \ldots , m \},\text{ and }
n_1, \ldots ,n_{(k-1)l} \in \{ 1, \ldots , N \} \Big\},
$$ 
(where for $k=1$, the string $n_1, \ldots , n_{(k-1)l}$
is the empty string, and $V^{n_1, \ldots , n_{(k-1)l}}$ 
is simply denoted by $V$), defined by
$$
V^{n_1, \ldots , n_{(k-1)l}} = \sum_{i=1}^{d^l} \ketbra{\omega_i}{\lambda_i^{n_1, \ldots, n_{(k-1)l}}},
$$
(where for $k=1$, the vectors $(\ket{\lambda_i^{n_1, \ldots , n_{(k-1)l}}})_{i=1}^{d^l}$ stand for the eigenvectors
$(\ket{\lambda_i})_{i=1}^{d^l}$ of the ensemble state
$\rho_l$). 

Since the length sequence $(\ell_i)_{i=1}^{d^l}$ of the obtained sequence of jointly orthonormal length codewords $(\ket{\omega_i})_{i=1}^{d^l}$ minimizes Equation~\ref{re-write_final}, it follows that Equation~(\ref{E:optimality}) is satisfied and therefore, $L(\mathcal{V})$ is the lower bound on the average codeword length over all special block codes.
\end{proof}

\section{Proofs of Theorem~\ref{Thm:derived} and Corollary~\ref{corollary_stationary}} \label{additional_theorems}
The proof of Theorem~\ref{Thm:derived} proceeds similarly as that of Theorem~\ref{Thm:main}.
\begin{proof}[Sketch of the proof of Theorem~\ref{Thm:derived}]
As before, fix $m, l$, and an arbitrary $k^{th}$ block conditional ensemble state $\rho^{n_1, \ldots, n_{(k-1)l}} = \sum_{i=1}^{d^l} \lambda_i^{n_1, \ldots, n_{(k-1)l}} \ketbra{\lambda_i^{n_1, \ldots, n_{(k-1)l}}}$ such that $\lambda_r^{n_1, \ldots, n_{(k-1)}} \geq \lambda_s^{n_1, \ldots, n_{(k-1)l}}$ for $r \leq s$. Define $\mathfrak{L}$ as before. Fix a constrained special block code $\mathcal{U} = \{U_k\}_{k=1}^m$ such that each isometry $U_k =  \sum_{i=1}^{d^l} \ketbra{\psi_i}{e_i^k} \in \mathcal{U}$ has $(\ket{\psi_i})_{i=1}^{d^l}$ as common jointly orthonormal sequence of length codewords and an arbitrary orthonormal sequence $(\ket{e_i^k})_{i=1}^{d^l}$. For every $k^{th}$ block, Alice uses the isometry $U_k$ to encode the symbols. Suppose that length$(\ket{\psi_i}) = \ell_i \in \N $ and without loss of generality, assume $\ell_r \leq \ell_s$ for $r \leq s$. By the forward direction of quantum Kraft-McMillan inequality (Theorem~\ref{generalized_Kraft}), we have that the set of non-negative integer lengths $\{\ell_i\}_{i=1}^{d^l}$ belongs to $\mathfrak{L}$.

In the sketch of the proof of Theorem~\ref{Thm:main}, by using the Birkhoff-von Neumann Theorem and Lemma \ref{lemma}, we argued that for a given set of lengths $(\ell_i)_{i=1}^{d^l} \in \mathfrak{L}$, in order to achieve the absolute minimization of the average codeword length, every $\rho^{n_1, \ldots, n_{(k-1)l}}$ needs to be encoded using its corresponding optimal isometry $U^{n_1, \ldots, n_{(k-1)l}} = \sum_{i=1}^{d^l} \ketbra{\psi_i}{\lambda_i^{n_1, \ldots, n_{(k-1)l}}}$. However, Alice now favors simplicity at the cost of higher average codeword length, and uses a fixed isometry $U_k = \sum_{i=1}^{d^l} \ketbra{\psi_i}{e_i^k}$ to encode each $\rho^{n_1, \ldots, n_{(k-1)l}}$ for every sequence of states $\ket{s_{n_1}, \ldots, s_{n_{(k-1)l}}}$ as the first $(k-1)l$ emissions. So, we can assert that 
\begin{equation} \label{compare_two_lengths}
\Tr (U_k \rho^{n_1, \ldots, n_{(k-1)l}} U_k^\dagger \Lambda) \geq \Tr (U^{n_1, \ldots, n_{(k-1)l}} \rho^{n_1, \ldots, n_{(k-1)l}} (U^{n_1, \ldots, n_{(k-1)l}})^\dagger \Lambda)
\end{equation}
Now the average codeword length for encoding the $k^{th}$ block ensemble state 
$$
\rho^k = \sum_{n_1, \ldots, n_{(k-1)l} =1}^N p(n_1, \ldots, n_{(k-1)l}) \rho^{n_1, \ldots, n_{(k-1)l}}
$$ is given by 
\begin{equation} \label{k-th block length}
\sum_{n_1, \ldots, n_{(k-1)l}} p(n_1, \ldots, n_{(k-1)l}) \Tr (U_k \rho^{n_1, \ldots, n_{(k-1)l}} U_k^\dagger \Lambda)
\end{equation}
for the fixed set of lengths $\{\ell_i\}_{i=1}^{d^l}$. Since $U_k$ is independent of $\ket{s_{n_1}, \ldots, s_{n_{(k-1)l}}}$, using the linearity of trace and the definition of $\rho^k$ (Definition \ref{block_ensemble_state}), Equation~\eqref{k-th block length} can be re-written as 
\begin{align} \label{optimized k-th block length}
 \Tr (U_k \left( \sum_{n_1, \ldots, n_{(k-1)l}} p(n_1, \ldots, n_{(k-1)l}) \rho^{n_1, \ldots, n_{(k-1)l}} \right )U_k^\dagger \Lambda) = \Tr (U_k  \rho^k U_k^\dagger \Lambda).
\end{align}
Consider the spectral decomposition of the $k^{th}$ block ensemble state $\rho^k$ as $\rho^k = \sum_{i=1}^{d^l} \lambda_i^k \ketbra{\lambda_i^k}$ such that $\lambda_r^k \geq \lambda_s^k$ for $r \leq s$. Then, it can be shown again by using the Birkhoff-von Neumann Theorem and Lemma \ref{lemma} that $\Tr (U_k  \rho^k U_k^\dagger \Lambda)$ is minimized for the given set of lengths $\{\ell_i\}_{i=1}^{d^l}$ when $U_k = \sum_{i=1}^{d^l} \ketbra{\psi_i}{\lambda_i^k}$. If we use $U_k = \sum_{i=1}^{d^l} \ketbra{\psi_i}{\lambda_i^k}$, then  $\Tr (U_k  \rho^k U_k^\dagger \Lambda)$ simplifies to $\sum_{i=1}^{d^l} \lambda_i^k \ell_i$. Summing such length for each $\rho^k$ gives Equation~\eqref{LC}, which is the minimum average codeword length of encoding $m$ blocks for a fixed set of lengths $\{\ell_i\}_{i=1}^{d^l}$. Refer to the paragraph following Equation~\eqref{E:begining_of_m-1_block_part_2} in the Appendix \ref{main_theorem} for the details. Minimizing Equation~\eqref{LC} over 
all sets of lengths in $\mathfrak{L}$ gives Equation~\eqref{E:Best_ACL_C}, which is the minimum average codeword length of encoding $m$ blocks over the set of all constrained special block codes. Once the minimizer set of lengths is found, one can then construct an optimal jointly orthonormal sequence of length codewords $(\ket{\omega_i})_{i=1}^{d^l}$ and define a minimizer special block code $\mathcal{V}$ as stated in the theorem. 
\end{proof}

From Equation~(\ref{compare_two_lengths}), it should now be clear that for the same set of lengths $\{\ell_i\}_{i=1}^{d^l}$, the minimum average codeword length produced by this non-adaptive compression for encoding $\rho^k$ is at least as large as that produced by the adaptive compression for every $k^{th}$ block. Precisely, this means that for the same set of lengths $\{\ell_i\}_{i=1}^{d^l}$,
$$
\sum_{i=1}^{d^l} \lambda_i^k \ell_i \geq \sum_{n_1, \ldots, n_{(k-1)l}} p(n_1, \ldots, n_{(k-1)l}) \sum_{i=1}^{d^l} \lambda_i^{n_1, \ldots, n_{(k-1)l}} \ell_i
$$ for every $k^{th}$ block. Consequently, it's not hard to see that the minimum average codeword length produced by the non-adaptive compression over the set of all constrained special block codes for encoding $m$ blocks of pure states is at least as large as that produced by the adaptive compression over the set of all special block codes. \\

The proof of Corollary~\ref{corollary_stationary} proceeds as follows:
\begin{proof}[Proof of Corollary~\ref{corollary_stationary}]
If the quantum source is stationary, then we can show as below that all the $k^{th}$ block ensemble states $\rho^k$'s are identical.
Indeed, from Equations~\eqref{kth_block_ensemble} and \eqref{block_ensemble}, we have 
\begin{align*}
\rho^k = & \sum_{n_1, \ldots, n_{(k-1)l} = 1}^N \quad \sum_{n_{(k-1)l+1}, \ldots, n_{kl} =1}^N p(n_1, \ldots, n_{kl}) \ketbra{s_{n_{(k-1)l+1}} \cdots s_{n_{kl}}} \\
& = \sum_{n_{(k-1)l+1}, \ldots, n_{kl} =1}^N  \ketbra{s_{n_{(k-1)l+1}} \cdots s_{n_{kl}}}  \sum_{n_1, \ldots, n_{(k-1)l} = 1}^N p(n_1, \ldots, n_{kl}) \\
& = \sum_{n_{(k-1)l+1}, \ldots, n_{kl} =1}^N  \ketbra{s_{n_{(k-1)l+1}} \cdots s_{n_{kl}}} \mathbb{P}(X_{(k-1)l+1} = s_{n_{(k-1)l+1}}, \ldots, X_{kl} = s_{n_{kl}}) \text{}\\
& =  \sum_{n_{(k-1)l+1}, \ldots, n_{kl} =1}^N  \ketbra{s_{n_{(k-1)l+1}} \cdots s_{n_{kl}}} \mathbb{P}(X_1 = s_{n_{(k-1)l+1}}, \ldots, X_l = s_{n_{kl}})  \\ 
&(\text{from Equation~\eqref{translation-invariance}}) \\
& =  \sum_{n_1, \ldots, n_l =1}^N  \ketbra{s_{n_1} \cdots s_{n_l}} \mathbb{P}(X_1 = s_{n_1}, \ldots, X_l = s_{n_l}) \quad \text{(equivalently re-writing above equation)} \\
& =  \sum_{n_1, \ldots, n_l =1}^N  \ketbra{s_{n_1} \cdots s_{n_l}} p(n_1, \ldots, n_l) \quad \text{(from Equation~\eqref{stochastic_px}}) \\
& = \rho_l \quad \text{(from Definition~\ref{general_ensemble})}.
\end{align*}
Therefore, we have $\rho^1 = \cdots = \rho^k = \rho_{l}$. Consider the spectral decomposition of $\rho_{l}$ and the isometry $V = \sum_{i=1}^{d^l} \ketbra{\omega_i}{\lambda_i}$ as in the statement of the corollary. This isometry $V$ is the optimal isometry \cite[Theorem 2]{Bellomo2017Nov} for encoding the state $\rho_{l}$. As a result, it turns out that in order to optimally encode every $\rho^k$, Alice can use the same isometry $V$. In other words, the minimizer constrained special block code for a quantum stationary source is $\mathcal{V} = V$. Therefore, the minimum average codeword length for encoding $m$ blocks of states will be $ILC(\mathcal{S}, m, l) = m\Tr (V  \rho_{l} V^\dagger \Lambda) = m\sum_{i=1}^{d^l} \lambda_i \ell_i$. Then the minimum average codeword length per block is $\frac{ILC(\mathcal{S}, m, l)}{m}=\Tr (V  \rho_{l} V^\dagger \Lambda) = \sum_{i=1}^{d^l} \lambda_i \ell_i = ILC(\mathcal{S}, 1, l)$. Notice that the minimum average codeword length per block depends only on the block size $l$ and is independent of the number of blocks $m$. We can now directly apply \cite[Theorem 3]{Bellomo2017Nov} to bound this minimum average codeword length per block from above and below in terms of the von-Neumann entropy of the source as follows:
\begin{equation} \label{stationary_bound}
S(\rho_{l}) \leq \sum_{i=1}^{d^l} \lambda_i \ell_i \leq S(\rho_{l}) + 1,
\end{equation}
where $S(\rho_{l})$ represents the von-Neumann entropy of $\rho_{l}$.
Dividing the above Inequality (\ref{stationary_bound}) throughout by the block size $l$, one obtains
\begin{equation} \label{stationary_bound_rate}
\frac{S(\rho_{l})}{l} \leq \frac{1}{l} \sum_{i=1}^{d^l} \lambda_i \ell_i \leq \frac{S(\rho_{l})}{l} + \frac{1}{l},
\end{equation}
where $\frac{1}{l} \sum_{i=1}^{d^l} \lambda_i \ell_i = \frac{ILC(\mathcal{S}, 1, l)}{l}$ represents the minimum average codeword length per symbol over all constrained special block codes. Finally, taking the limit $l \to \infty$ in the above Inequality~\ref{stationary_bound_rate}, we get
\begin{equation}
\lim_{l \to \infty} \ \frac{ILC(\mathcal{S}, 1, l)}{l} = \frac{1}{l} \sum_{i=1}^{d^l} \lambda_i \ell_i = \lim_{l \to \infty} \frac{S(\rho_{l})}{l}
\end{equation}
where $ \lim_{l \to \infty} \frac{S(\rho_{l})}{l}$ represents the von-Neumann entropy rate of the stationary source which is known to always exist \cite[Theorem 9.14]{alicki2001quantum}.
\end{proof}

\end{document}